\documentclass[aps,prd,reprint,superscriptaddress,longbibliography,showpacs,nofootinbib]{revtex4-2}

\usepackage{graphicx}
\usepackage{svg}
\usepackage{dcolumn}
\usepackage{bm}

\usepackage[figurename=Figure]{caption}
\usepackage{ragged2e}
\DeclareCaptionJustification{plain}{\justifying}
\usepackage[percent]{overpic}
\usepackage{float}    
\usepackage{verbatim} 
\usepackage{amsmath}  
\usepackage{bbold}
\usepackage{upgreek} 
\usepackage{amssymb}  
\usepackage{svg}
\usepackage{enumitem} 
\usepackage{latexsym,epsfig}
\usepackage{subcaption}
\usepackage{stackrel}
\usepackage{mathtools}
\usepackage{amsthm}
\newtheorem{proposition}{Proposition}
\newtheorem{definition}{Definition}

\newtheorem{theorem}{Theorem}

\usepackage{placeins} 

\usepackage[colorlinks=true,breaklinks=true,allcolors=blue]{hyperref}
\usepackage[capitalize]{cleveref}
\usepackage{lipsum}
\usepackage{dsfont}
\usepackage{coffeestains}
\newtheorem*{remark}{Remark}

\usepackage[scr=boondox]{mathalfa}
 
\usepackage[scr=boondox]{mathalfa}

\newcommand{\be}{\begin{equation}}
	\newcommand{\ee}{\end{equation}}
\newcommand{\bea}{\begin{equation}\begin{aligned}}
		\newcommand{\eea}{\end{aligned}\end{equation}}
\newcommand{\ben}{\begin{enumerate}}
	\newcommand{\een}{\end{enumerate}}

\DeclareDocumentCommand{\nint}{ O{} O{} m }{\ensuremath{ \int_{\mbox{\scriptsize $#1$}}^{\mbox{\scriptsize$#2$}}\!\!\! \mbox{\small $\,\mathrm{d}#3$\! }}}

\usepackage{tikz,bm} 
\usepackage{circuitikz}
\usetikzlibrary{decorations.markings}
\usetikzlibrary{decorations.pathreplacing,calligraphy}

\definecolor{mycolor}{rgb}{1,0.2,0.3}
\definecolor{brightgreen}{rgb}{0.4, 1.0, 0.0}
\definecolor{britishracinggreen}{rgb}{0.0, 0.26, 0.15}
\definecolor{cadmiumgreen}{rgb}{0.0, 0.42, 0.24}
\definecolor{ceruleanblue}{rgb}{0.16, 0.32, 0.75}
\definecolor{darkelectricblue}{rgb}{0.33, 0.41, 0.47}
\definecolor{darkpowderblue}{rgb}{0.0, 0.2, 0.6}
\definecolor{darktangerine}{rgb}{1.0, 0.66, 0.07}
\definecolor{emerald}{rgb}{0.31, 0.78, 0.47}
\definecolor{palatinatepurple}{rgb}{0.41, 0.16, 0.38}
\definecolor{pastelviolet}{rgb}{0.8, 0.6, 0.79}

\begin{document}

	\title{Geometric Decoherence Time in Lindbladian Dynamics}
	\author{Rishabh Jha}
	\email{rishabh.jha@usc.edu}
	\affiliation{%
		Department of Physics and Astronomy, University of Southern California, Los Angeles, CA 90089-0484, USA
	}
	\author{Stephan Haas}
	\email{shaas@usc.edu}
	\affiliation{%
		Department of Physics and Astronomy, University of Southern California, Los Angeles, CA 90089-0484, USA
	}
	\author{Abhinav Prem}
	\email{aprem@bard.edu}
	\affiliation{Physics Program, Bard College, 30 Campus Road, Annandale-on-Hudson, NY 12504, USA}
	%
	
	
\begin{abstract}

The onset of decoherence in open many-body systems lacks a dynamical timescale grounded in the loss of bipartite entanglement. Here, we introduce the \textit{geometric decoherence time}, defined as the earliest moment the monotone relation between logarithmic negativity and R\'{e}nyi-$\tfrac{1}{2}$ entropy---exactly equal across any bipartition for pure states---breaks down under open-system evolution, signaling entropy growth without accompanying entanglement growth. We establish this criterion in both single-particle Gaussian dynamics and many-body Lindbladian evolution. We show that quantum mutual information provides a complementary long-time diagnostic: its asymptotic vanishing is equivalent to factorization of the steady state across the bipartition, a condition strictly stronger than separability, and whenever a product steady state is approached exponentially in trace norm, negativity and mutual information share the same decay rate. In the presence of a strong symmetry, this tracking can fail---residual classical correlations can survive after entanglement has vanished. In the Kitaev chain with balanced gain and loss, we derive a closed-form solution and show that the topological phase sustains longer coherence times than the trivial phase at identical dissipation, with a local minimum at the chiral-symmetric point. In the interacting XXZ chain, exact many-body evolution shows that local $Z$-dephasing preserves residual classical correlations, whereas gain and loss restore the mutual-information tracking of negativity. Our results establish the geometric decoherence time as a dynamical scale tracking the onset of decoherence.
	\end{abstract}

	\maketitle
	



	\section{Introduction}
	\label{sec:introduction}
	
The loss of quantum coherence through coupling to an environment is a central feature of open system dynamics, with implications ranging from the emergence of classicality to the limitations of quantum information storage and processing~\cite{Zurek2003Decoherence,Schlosshauer2005Decoherence,Breuer2002OpenQS,Rivas2012OpenQSIntro}. In the standard open-system framework, decoherence is encoded in the spectral properties of quantum dynamical semigroups and master equations~\cite{Breuer2002OpenQS,Breuer2002ConceptsOpenQS,Rivas2012OpenQSIntro} where, in practice, one invokes relaxation or dephasing time scales extracted from single-time observables. However, for extended many-body systems, coherent entanglement growth coexists and competes with noise in the approach to the non-equilibrium steady state (NESS), rendering the usual characterizations of the onset of decoherence opaque. As such, a crisp notion of a decoherence timescale---one which is explicitly dynamical, sensitive to genuine quantum correlations, and robust across microscopic descriptions---remains lacking.

Entanglement provides the natural language for tracking quantum many-body dynamics, in both unitary and open-system settings~\cite{Amico2008EntanglementManyBody,Laflorencie2016EntanglementReview}. In closed systems, the entanglement entropy and its R\'enyi generalizations display universal scaling at criticality~\cite{Calabrese2004EntanglementQFT,Calabrese2009CFTreview,CalabreseCardy2009JSMTE} and, in integrable systems, their dynamics after global quenches is well characterized via a quasiparticle picture~\cite{AlbaCalabrese2017PNAS,AlbaCalabrese2018SciPost}. For mixed states, which naturally arise under open system dynamics, the entanglement entropy alone is insufficient; instead, the logarithmic negativity~\cite{Peres1996PPT,VidalWerner2002Negativity} (a computable monotone under positive-partial transpose preserving operations~\cite{Plenio2005LogNeg}) is a standard mixed-state entanglement measure~\cite{Calabrese2012NegativityQFT,NegativityDimension2013,EltschkaSiewert2014QuantifyingEnt,Hillery2024NegativityBounds,Coser2014NegativityQuench,Sherman2016NegativityFiniteT,lu2020detect,sang2021ent}. For free fermions, the fermionic entanglement negativity~\cite{Shapourian2017FermionicNegativity,shapourian2019,shapourian2019neg,Shapourian2021NegativitySpectrum} can be accessed via the correlation matrix and has been extensively investigated for non-interacting dissipative chains~\cite{alba2022,AlbaCarollo2022OpenNegativity,murciano2023,caceffo2023,caceffo2026}. Despite much progress in characterizing entanglement in open system dynamics, what is still missing is a notion of decoherence time based on the entanglement negativity that is defined via the loss of bipartite entanglement, rather than indirectly through spectra, transport coefficients, or late-time relaxation fits.

This gap is especially acute in open many-body systems. Exact solutions of boundary-driven and dephased spin chains~\cite{Prosen2011OpenXXZ,Znidaric2010DephasingXX,Yamanaka2023OpenXX,turkeshi2021}, as well as studies of monitored quantum circuits~\cite{Skinner2019MIPT,GullansHuse2019Purification,Jian2021MIPTSYK}, have revealed rich non-equilibrium physics, but the relevant timescales here are usually extracted from Liouvillian gaps, transport coefficients, or measurement rates. Experimentally, randomized measurements and classical-shadow tomography provide access to moments of the partial transpose and thus to the negativity~\cite{Elben2020MixedState,Zhou2020Negativity,Jeng2019MBEDistillation}, while quantum-gas microscopes provide R\'enyi entropies, mutual information, and higher-order correlators in lattice geometries~\cite{Bakr2009QuantumGasMicroscope,Gluza2020QuantumReadout}. It is thus timely to ask for a definition of decoherence time that is both operational and tied to bipartite correlations.


In this work, we provide a conceptually simple picture for the onset of decoherence, which we argue holds for both single-particle (i.e., Gaussian) and interacting Lindbladian dynamics. Our starting point is the exact identity between the logarithmic negativity ($E_N$) and the R\'enyi-$1/2$ entropy ($S_{1/2}$) that holds across any bipartition for a pure state~\cite{Calabrese2012NegativityQFT,Calabrese2004EntanglementQFT} (more generally, an exact universal relation between these two quantities has also been established in the hydrodynamic limit of integrable systems~\cite{AlbaCalabrese2019EPL,Bertini2022NegativityMI,CaceffoAlba2023TMI}). Under unitary dynamics from a pure state, the trajectory $(S_{1/2}(t),E_N(t))$ is pinned to the diagonal of the entropy--negativity plane. Open-system dynamics typically makes the bipartite state mixed and allows this relation to fail. We thus define the \textit{geometric decoherence time} $\tau_{\mathrm{d}}^{\mathrm{g}}$ as the earliest time at which the monotone relation between $S_{1/2}$ and negativity is lost, signaling the onset of entropy growth without accompanying entanglement growth. We provide analytical and numerical evidence that this dynamical timescale captures the onset of decoherence, and is quantitatively distinct from the peak time of the negativity, $\tau_{\mathrm{d}}^{\mathrm{peak}}$, a natural proxy introduced in earlier studies of negativity dynamics~\cite{Coser2014NegativityQuench,Sherman2016NegativityFiniteT,AlbaCarollo2022OpenNegativity}. 

 
We further show that the quantum mutual information plays a complementary role: it measures total correlations rather than entanglement and satisfies the exact criterion $I(A{:}B)=0$ if and only if $\rho_{AB}=\rho_A\otimes\rho_B$~\cite{NielsenChuangBook}. Hence, its vanishing detects factorization across the cut, a stronger condition than separability. Unital primitive dynamics gives one simple sufficient route to such a product NESS, since the unique stationary state is then maximally mixed and factorized, but the result applies to any product NESS. We show that factorization across the cut controls the joint late-time behavior of $I(A{:}B)$ and $E_N$: when the NESS factorizes across the bipartition, both diagnostics vanish asymptotically. We rigorously establish that if the approach to a product NESS is exponential with rate $\lambda$, then $I(A{:}B)$ and $E_N$ decay with the same asymptotic exponent $\lambda$, up to the standard entropy-continuity correction for mutual information. Beyond this asymptotic statement, we empirically find that, in the absence of a conservation law or kinetic constraint, mutual information tracks the peak, trough, and decay times of negativity throughout the Lindbladian dynamics. This tracking fails in the presence of a strong symmetry for the simple reason that negativity can vanish while mutual information remains finite, reflecting classical correlations that survive due to the corresponding conservation law.

We demonstrate these results in two settings. First, for the Kitaev chain with gain and loss, we utilize the standard third quantization approach to derive a closed-form expression for the correlation-matrix dynamics. In the case of balanced gain and loss, this solution us to map $\tau_{\mathrm{d}}^{\mathrm{g}}$ across the phase diagram as a function of varying chemical potential and the gain/loss rate. The topological phase sustains longer coherence times than the trivial phase at any given dissipation rate, with a local minimum at the chiral-symmetric point, showing that the closed system topology influences the decoherence time. Second, for the interacting XXZ chain, exact many-body Lindblad evolution shows that local dephasing (which preserves the strong symmetry associated with conservation of total magnetization) can leave a non-zero mutual-information plateau after negativity has vanished. Adding gain and loss breaks this strong symmetry, which restores asymptotic factorization.

This paper is organized as follows. In Sec.~\ref{sec:theory} we develop the information-theoretic framework, introducing the entropy--negativity trajectory $\Gamma(t)$, the geometric definition of $\tau_{\mathrm{d}}^{\mathrm{g}}$, and the role of mutual information as a diagnostic of asymptotic factorization. We establish that product factorization controls the joint late-time behavior of $I(A{:}B)$ and $E_N$, and that, when the approach to a product NESS is exponential, both diagnostics decay with the same asymptotic exponent. In Sec.~\ref{sec:single_body} we apply this framework to the exactly solvable open Kitaev chain, derive the correlation-matrix dynamics for balanced and imbalanced gain/loss, compare $\tau_{\mathrm{d}}^{\mathrm{g}}$ and $\tau_{\mathrm{d}}^{\mathrm{peak}}$ across the topological phase diagram in the chemical potential--dissipation $(\mu,\gamma)$ plane, and perform finite-size and finite-cut scaling. In Sec.~\ref{sec:many_body}, we turn to the interacting XXZ spin chain under local dephasing and under gain/loss dissipation, using exact many-body Lindblad evolution to probe the geometric criterion beyond Gaussian evolution. Sec.~\ref{sec:conclusion} summarizes our results and outlines future directions.

\section{Geometric Decoherence Time: Definitions and Diagnostics}
\label{sec:theory}

For any bipartite pure state $\rho_{AB}=|\psi\rangle\langle\psi|$ and any bipartition $A|B$, the logarithmic negativity equals the R\'enyi-$\tfrac{1}{2}$ entropy of either reduced state,
\begin{equation}
	E_N(\rho_{AB}) = S_{1/2}(\rho_A) = S_{1/2}(\rho_B).
	\label{eq:pure_identity}
\end{equation}
This identity follows from the Schmidt decomposition alone and is implicit in Refs.~\cite{VidalWerner2002Negativity,Calabrese2012NegativityQFT,Calabrese2004EntanglementQFT}; a self-contained proof is given in Appendix~\ref{app:pure_state_proof} for completeness. It holds throughout any unitary dynamics from a pure initial state, for any spatial dimension, any degree of integrability, and arbitrarily far from equilibrium. Under unitary dynamics the joint trajectory $\Gamma(t)=\bigl(S_{1/2}(t),\mathcal{N}(t)\bigr)$ in the entropy--negativity plane is pinned to the diagonal $\mathcal{N}=S_{1/2}$ for all time. Open-system evolution drives the bipartite state into a mixed state, lifts the trajectory off the diagonal, and eventually breaks the monotone relation between the two quantities. We define the geometric decoherence time $\tau_{\mathrm{d}}^{\mathrm{g}}$ as the earliest moment this monotone relation fails. All logarithms are base $2$ throughout.

\subsection{Entanglement Diagnostics}
\label{subsec:definitions}

We consider open-system dynamics governed by the Gorini--Kossakowski--Lindblad--Sudarshan (GKLS) master equation~\cite{Breuer2002OpenQS,Rivas2012OpenQSIntro}
\begin{equation}
	\frac{d\rho}{dt}
	=
	-i[H,\rho]
	+
	\sum_\mu
	\!\left(
	L_\mu \rho L_\mu^\dagger
	-\tfrac{1}{2}\{L_\mu^\dagger L_\mu,\rho\}
	\right),
	\label{eq:lindblad_general}
\end{equation}
where $H$ generates unitary dynamics and $L_\mu$ are jump operators encoding the coupling to the environment. In the exact many-body setting, Eq.~\eqref{eq:lindblad_general} is solved for the full density matrix. For quadratic fermionic Hamiltonians with linear jump operators, the third-quantization formalism~\cite{Prosen2008ThirdQuantization,Prosen2010ThirdQuantization} closes the hierarchy at the level of the two-point correlation matrix. For a bipartite state $\rho_{AB}$ with $B=A^c$, the reduced density matrices are $\rho_A=\mathrm{Tr}_B\rho_{AB}$ and $\rho_B=\mathrm{Tr}_A\rho_{AB}$. We use the von Neumann entropy $S(\rho)=-\mathrm{Tr}(\rho\log\rho)$ and the R\'enyi entropy
\begin{equation}
	S_\alpha(\rho)
	= \frac{1}{1-\alpha}\log\mathrm{Tr}(\rho^\alpha),
	\qquad \alpha>0,\;\alpha\neq 1,
	\label{eq:entropies}
\end{equation}
with $S_{1/2}(\rho)=2\log\mathrm{Tr}\sqrt{\rho}$ the case relevant to Eq.~\eqref{eq:pure_identity}. The quantum mutual information is
\begin{equation}
	I(A{:}B) = S(\rho_A) + S(\rho_B) - S(\rho_{AB}),
	\label{eq:MI_def}
\end{equation}
and the logarithmic negativity is~\cite{VidalWerner2002Negativity,Plenio2005LogNeg}
\begin{equation}
	E_N(\rho_{AB}) = \log\!\left\|\rho_{AB}^{T_B}\right\|_1,
	\label{eq:LN_def}
\end{equation}
where $T_B$ denotes partial transpose on $B$ and $\|X\|_1=\mathrm{Tr}\sqrt{X^\dagger X}$ is the trace norm. The two quantities measure distinct aspects of bipartite correlations: $E_N$ is an entanglement monotone and vanishes on all separable states, while $I(A{:}B)$ quantifies total correlations and vanishes only when $\rho_{AB}$ factorizes. The classically correlated state $\rho_{AB}=\frac{1}{2}(|00\rangle\langle 00|+|11\rangle\langle 11|)$ illustrates the gap: it has $E_N=0$ yet $I(A{:}B)=1$. Despite this difference, both quantities are accessible from the same bipartite trajectory $\Gamma(t)$ and, as established in Sec.~\ref{subsec:MI_asymptotics}, they share the same asymptotic decay exponent whenever the steady state factorizes.

For fermionic Gaussian states, the partial transpose does not preserve Gaussianity, so $E_N$ is not directly accessible from the correlation matrix~\cite{Shapourian2017FermionicNegativity}. Let $c_j$, $c_j^\dagger$ ($j=1,\ldots,L$) denote fermionic operators satisfying $\{c_i,c_j^\dagger\}=\delta_{ij}$. The two-point correlation matrix $(C_{AB})_{ij}=\langle c_j^\dagger c_i\rangle$ determines the shifted matrix $G_{AB}=2C_{AB}-\mathds{1}$, written in block form as
\begin{equation}
	G=
	\begin{pmatrix}
		G_{AA} & G_{AB}^{\phantom{\dagger}}\\
		G_{BA} & G_{BB}
	\end{pmatrix}.
	\label{eq:GA_blocks}
\end{equation}
The fermionic partial-transpose construction~\cite{Shapourian2017FermionicNegativity,shapourian2019} introduces auxiliary matrices
\begin{equation}
	G_\pm=
	\begin{pmatrix}
		G_{AA} & \pm i\,G_{AB}^{\phantom{\dagger}}\\
		\pm i\,G_{BA} & -G_{BB}
	\end{pmatrix},
	\label{eq:Gpm_def}
\end{equation}
and defines $M=\mathds{1}+G_+G_-$ and $\widetilde{G}^{\,T_B}=\frac{1}{2}[\mathds{1}-M^{-1}(G_++G_-)]$. Denoting by $\{\nu_j\}$ and $\{\widetilde{\nu}_j\}$ the eigenvalues of $C_{AB}$ and $\widetilde{G}^{\,T_B}$, the fermionic Gaussian negativity is~\cite{AlbaCarollo2022OpenNegativity}
\begin{equation}
	\begin{aligned}
	\mathcal{E}_F(\rho_{AB})
	=& \, 
	\frac{1}{2}\sum_j \log  \!\left(
	\sqrt{\widetilde{\nu}_j}+\sqrt{1-\widetilde{\nu}_j}
	\right) \\
	&+
	\frac{1}{4}\sum_j\log_2\!\left[
	\nu_j^2+(1-\nu_j)^2
	\right] \, .
	\end{aligned}
	\label{eq:fermionic_negativity_spectrum}
\end{equation}
We reserve $E_N$ for the exact many-body logarithmic negativity and $\mathcal{E}_F$ for its Gaussian counterpart. Both satisfy the pure-state identity in Eq.~\eqref{eq:pure_identity} when $\gamma=0$, so the geometric decoherence criterion is well-defined in either case. The unified notation
\begin{equation}
	\mathcal{N}(t)=
	\begin{cases}
		E_N(t), & \text{exact many-body,}\\
		\mathcal{E}_F(t), & \text{Gaussian,}
	\end{cases}
\end{equation} 
applies whenever the discussion holds in both settings; the geometric decoherence time $\tau_{\mathrm{d}}^{\mathrm{g}}$ is defined identically in terms of $\mathcal{N}$ in either case.

\subsection{Geometric Decoherence Time}
\label{subsec:geom_dec_time}

As stated above, under unitary dynamics from a pure initial state $\Gamma(t)$ is pinned to the diagonal; under Lindbladian dynamics, it bends away from the diagonal, with the earliest moment this occurs defined as the geometric decoherence time. 
\begin{definition}[Geometric decoherence time]
	\label{def:geom_dec}
Assume $t\mapsto(S_{1/2}(t),\mathcal{N}(t))$ is continuous. The \emph{geometric decoherence time} $\tau_{\mathrm{d}}^{\mathrm{g}}$ is the infimum of times $t>0$ for which there exist $t_1<t_2\le t$ such that
	\begin{equation}
		S_{1/2}(t_2)\ge S_{1/2}(t_1)
		\quad\text{but}\quad
		\mathcal{N}(t_2)<\mathcal{N}(t_1).
		\label{eq:geom_break}
	\end{equation}
When the trajectory is differentiable and the early-time branch over $S_{1/2}$ is single-valued, this is equivalently
	\begin{equation}
		\tau_{\mathrm{d}}^{\mathrm{g}}
		=
		\inf\!\left\{t>0:\,\dot{\mathcal{N}}(t)<0
		\;\text{while}\;\dot{S}_{1/2}(t)\ge 0\right\}.
		\label{eq:geom_break_local}
	\end{equation}
\end{definition}
Here, the label ``geometric'' refers to the shape of the parametric curve $\Gamma(t)\subset\mathbb{R}^2$ and this definition captures the onset of entanglement loss in the presence of continuing entropy growth. A natural comparison scale, motivated by earlier studies on negativity dynamics after quenches and under dissipation~\cite{Coser2014NegativityQuench,Sherman2016NegativityFiniteT,AlbaCarollo2022OpenNegativity}, is the time at which $\mathcal{N}$ attains its global maximum,
\begin{equation}
	\tau_{\mathrm{d}}^{\mathrm{peak}}
	= \arg\max_{t\ge 0}\,\mathcal{N}(t).
	\label{eq:peak_dec}
\end{equation}
The two scales are logically distinct: the peak of $\mathcal{N}$ identifies the extremum of a single time trace, while $\tau_{\mathrm{d}}^{\mathrm{g}}$ identifies the earliest moment the joint trajectory loses monotonicity. We discuss the distinction between these two diagnostics in Sec.~\ref{sec:single_body} and show that the geometric criterion is a sharper diagnostic for the onset of decoherence.

\subsection{Mutual Information and Asymptotic Factorization}
\label{subsec:MI_asymptotics}

Evaluating the logarithmic negativity requires either full state tomography or replica methods. In contrast, the mutual information $I(A{:}B)$ only requires the reduced density matrices on $A$ and $B$, is directly accessible from subsystem entropies, and satisfies the bound $E_N\le S_{1/2}(\rho_A)$~\cite{Plenio2005LogNeg}. This motivates asking whether $I(A{:}B)$ tracks the same dynamical timescales as $\mathcal{N}$. The rigorous answer for the long-time behavior follows from the relative-entropy representation~\cite{NielsenChuangBook}
\begin{equation}
	I(A{:}B)
	= D\!\left(\rho_{AB}\,\middle\|\,\rho_A\otimes\rho_B\right)
	\ge 0,
	\label{eq:MI_relative_entropy}
\end{equation}
where $D(\rho\|\sigma)=\mathrm{Tr}(\rho\log\rho)-\mathrm{Tr}(\rho\log\sigma)$,
which gives the precise sense in which $I(A{:}B)$ measures total correlations: it satisfies
\begin{equation}
	I(A{:}B)=0
	\;\Longleftrightarrow\;
	\rho_{AB}=\rho_A\otimes\rho_B.
	\label{eq:MI_iff_product}
\end{equation}
The mutual information vanishes only when the state factorizes, not merely when entanglement vanishes.

\begin{proposition}[Product factorization controls the joint asymptotics]
	\label{thm:factorization_joint}
Suppose the Lindblad evolution admits a trace-norm limit
	$\rho(t)\to\rho_\infty$ as $t\to\infty$. Then:
	\begin{enumerate}[label=(\roman*)]
		\item $\rho_\infty=\rho_{\infty,A}\otimes\rho_{\infty,B}$
		if and only if $\lim_{t\to\infty}I(A{:}B;t)=0$.
		\item If $\rho_\infty$ factorizes, then
		$\lim_{t\to\infty}E_N(t)=0$.
	\end{enumerate}
\end{proposition}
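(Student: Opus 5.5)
The plan is to reduce both parts to continuity statements in finite dimension and then invoke the exact criterion \eqref{eq:MI_iff_product}. Throughout, the total Hilbert space is finite-dimensional, with $d=d_Ad_B$; this holds for the chains considered here.

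\emph{Step 1 (continuity of marginals).} The partial trace is trace-norm contractive, so $\|\rho_A(t)-\rho_{\infty,A}\|_1\le\|\rho(t)-\rho_\infty\|_1\to0$, and likewise for $B$. The product map is continuous, since
\[
\|\rho_A\otimes\rho_B-\sigma_A\otimes\sigma_B\|_1\le\|\rho_A-\sigma_A\|_1+\|\rho_B-\sigma_B\|_1 .
\]
Hence $\rho_A(t)\otimes\rho_B(t)\to\rho_{\infty,A}\otimes\rho_{\infty,B}$.

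\emph{Step 2 (part (i)).} By the Fannes--Audenaert inequality, the von Neumann entropy is uniformly continuous on states of fixed finite dimension. Applying it to $\rho_A$, $\rho_B$ and $\rho_{AB}$ separately in the definition \eqref{eq:MI_def} gives
\[
I(A{:}B;t)\to I_\infty:=S(\rho_{\infty,A})+S(\rho_{\infty,B})-S(\rho_\infty),
\]
which is the mutual information of $\rho_\infty$. By \eqref{eq:MI_iff_product} applied to $\rho_\infty$, we have $I_\infty=0$ if and only if $\rho_\infty=\rho_{\infty,A}\otimes\rho_{\infty,B}$, and this is exactly (i). The only subtlety is that \eqref{eq:MI_iff_product} rests on the equality case of Klein's inequality, $D(\rho\|\sigma)=0\iff\rho=\sigma$. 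That fact is standard and I will cite it rather than reprove it.

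\emph{Step 3 (part (ii)).} The partial transpose satisfies $\|X^{T_B}\|_1\le d_B\|X\|_1$. One way to see this is to write $T_B$ as a sum of $d_B^2$ terms of the form $(\mathds{1}\otimes|i\rangle\langle j|)X(\mathds{1}\otimes|i\rangle\langle j|)$, each of which is trace-norm contractive. A cruder bound through Hilbert--Schmidt norms would also suffice. It follows that $\|\rho(t)^{T_B}\|_1\to\|\rho_\infty^{T_B}\|_1$, and since $\log$ is continuous on $[1,\infty)$, $E_N(t)\to E_N(\rho_\infty)$.

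If $\rho_\infty$ is a product state, then $\rho_\infty^{T_B}=\rho_{\infty,A}\otimes\rho_{\infty,B}^{T}$. This operator is positive with unit trace, so its trace norm is $1$ and $E_N(\rho_\infty)=0$. More generally, this argument shows that $E_N\to0$ for any separable, or even PPT, limit.

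\emph{Main obstacle.} The mathematics is routine; the delicate point is the dimension dependence. In infinite dimension the entropy is discontinuous and the partial transpose is not trace-norm bounded. I will therefore state the finite-dimensional assumption explicitly, noting that it covers both the spin and the fermionic lattice settings. For fermions, $T_B$ should be replaced by the fermionic partial transpose. That map is again a finite composition of unitary conjugations and local transposes, so the same bounded-linear-map argument applies to $\mathcal{E}_F$.
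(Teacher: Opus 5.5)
Your proof is correct and follows the same route as the paper's. For (i) you use continuity of $I$ in trace norm in finite dimensions together with the equality case of Klein's inequality, and for (ii) you use boundedness of $T_B$ plus positivity of $\rho_{\infty,A}\otimes\rho_{\infty,B}^{T}$; invoking Fannes--Audenaert simply makes the paper's continuity step explicit.

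One small slip: writing $T_B$ as $d_B^2$ trace-norm contractive terms only gives $\|X^{T_B}\|_1\le d_B^2\|X\|_1$, not $d_B\|X\|_1$. The sharper constant $d_B$ comes from the diamond norm of the transpose. Since any finite constant suffices, this does not affect the argument.
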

The proof, given in Appendix~\ref{app:factorization_proof}, uses only Eq.~\eqref{eq:MI_iff_product} and continuity of $I$ in trace norm on finite-dimensional Hilbert spaces. Proposition~\ref{thm:factorization_joint} establishes that the asymptotic vanishing of $I(A{:}B)$ is the necessary and sufficient signature of product factorization of the NESS, independent of whether the NESS is maximally mixed, primitive, or unital.

\textbf{Remark} [Sufficient condition and its limits]. A GKLS evolution is called \emph{primitive} if it admits a unique steady state, and \emph{unital} if $\mathds{1}/d$ is a steady state~\cite{Breuer2002OpenQS,Wolf2012QChannels}; when both hold, the unique steady state $\mathds{1}/d$ factorizes as $(\mathds{1}_A/d_A)\otimes(\mathds{1}_B/d_B)$, satisfying condition~(i) and forcing $I(A{:}B)$ and $E_N$ to vanish asymptotically (see Appendix~\ref{app:unital_primitive}). This is sufficient but not necessary: any product NESS, not only the maximally mixed one, satisfies condition~(i). Conversely, if a strong symmetry prevents the NESS from factorizing, condition~(i) fails and $I$ need not vanish, as demonstrated in Sec.~\ref{sec:many_body}. The results of this section require neither primitivity nor unitality.

\begin{theorem}[Shared asymptotic decay exponent]
	\label{cor:product_decay}
Under the assumptions of Proposition~\ref{thm:factorization_joint}, suppose further that $\|\rho(t)-\rho_\infty\|_1\le Ce^{-\lambda t}$ for constants $C,\lambda>0$ and all sufficiently large $t$. Then
	\begin{equation}
		I(A{:}B;t)=O\!\left(te^{-\lambda t}\right),
		\qquad
		E_N(t)=O\!\left(e^{-\lambda t}\right).
		\label{eq:quantitative_product_rates}
	\end{equation}
\end{theorem}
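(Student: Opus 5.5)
The plan is to reduce both estimates to trace-norm continuity bounds around the product state $\rho_\infty=\rho_{\infty,A}\otimes\rho_{\infty,B}$. This state factorizes by hypothesis, because Theorem~\ref{cor:product_decay} is applied in the factorizing case where both limits in Proposition~\ref{thm:factorization_joint} vanish. Throughout, write $\epsilon(t)=\|\rho(t)-\rho_\infty\|_1\le Ce^{-\lambda t}$. Partial traces are trace-norm contractions, so the marginals satisfy $\|\rho_A(t)-\rho_{\infty,A}\|_1\le\epsilon(t)$ and $\|\rho_B(t)-\rho_{\infty,B}\|_1\le\epsilon(t)$.

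\emph{Negativity.} Since $\rho_\infty$ is a product state, $\rho_\infty^{T_B}=\rho_{\infty,A}\otimes\rho_{\infty,B}^{T}$ is a positive operator of unit trace, so $\|\rho_\infty^{T_B}\|_1=1$. The partial transpose is not a trace-norm contraction, but on a finite-dimensional space it is bounded. Writing $X^{T_B}$ as a sum of at most $d_B^2$ terms $X_{ij}\otimes|j\rangle\langle i|$, or using the standard bound $\|X^{T_B}\|_1\le d_B\|X\|_1$ (more crudely $\min(d_A,d_B)$ suffices), we get $\|\rho(t)^{T_B}\|_1\le 1+d_B\,\epsilon(t)$. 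Taking logarithms gives $0\le E_N(t)\le \log(1+d_B\epsilon(t))\le d_B\epsilon(t)/\ln 2=O(e^{-\lambda t})$. The same argument applies to the Gaussian $\mathcal{E}_F$ if needed, via the fermionic partial transpose, which is also a bounded linear map.

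\emph{Mutual information.} Write
\[
I(A{:}B;t)=[S(\rho_A(t))-S(\rho_{\infty,A})]+[S(\rho_B(t))-S(\rho_{\infty,B})]-[S(\rho(t))-S(\rho_\infty)].
\]
This uses $I(\rho_\infty)=0$, i.e.\ $S(\rho_\infty)=S(\rho_{\infty,A})+S(\rho_{\infty,B})$. Each bracket is controlled by the Fannes--Audenaert inequality $|S(\rho)-S(\sigma)|\le T\log(d-1)+h(T)$ with $T=\tfrac12\|\rho-\sigma\|_1$ and $h$ the binary entropy. Since $h(T)\le T\log(e/T)$ for small $T$, each term is $O(\epsilon\log(1/\epsilon))$. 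With $\epsilon\le Ce^{-\lambda t}$ this gives $\epsilon\log(1/\epsilon)\lesssim Ce^{-\lambda t}(\lambda t/\ln 2+\text{const})=O(te^{-\lambda t})$. Here we use that $x\mapsto x\log(1/x)$ is increasing near $0$, so the upper bound on $\epsilon$ may be substituted. This is exactly the "entropy-continuity correction" mentioned in the text, and it is where the extra factor of $t$ comes from.

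The only delicate step is the monotone substitution $\epsilon\mapsto Ce^{-\lambda t}$ inside $h$. It requires restricting to $t$ large enough that $Ce^{-\lambda t}\le 1/e$ (or $\le 1/2$), so that we are on the increasing branch. The dimension-dependent prefactors, $d_B$ for the partial transpose and $\log(d-1)$ for Fannes, are finite constants absorbed into the $O(\cdot)$. They should nonetheless be flagged as scaling with system size, so the statement is asymptotic at fixed $L$.
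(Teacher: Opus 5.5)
Your proposal is correct and follows essentially the same route as the paper's proof in Appendix~\ref{app:late_time_bounds}. For the mutual information, both use contractivity of the partial trace and the Audenaert--Fannes bound on the three entropy differences, with the $h_2(\eta)\sim\eta\log(1/\eta)$ term producing the factor $t$. For the negativity, both use boundedness of the partial transpose around the positive operator $\rho_\infty^{T_B}$ together with $\log(1+x)\le x/\ln 2$, and both make the same implicit assumption that $\rho_\infty$ is a product state. The only difference is cosmetic: you give an explicit constant, $\|X^{T_B}\|_1\le\min(d_A,d_B)\|X\|_1$, where the paper invokes an abstract $c_{T_B}$; note that $\min(d_A,d_B)$ is the sharper of your two options, not the cruder one.
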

Both $I$ and $E_N$ therefore share the asymptotic decay exponent $\lambda$ near a product fixed point, with $I$ acquiring at most a logarithmic prefactor from the Fannes--Audenaert inequality~\cite{Fannes1973Dec,Audenaert2007Jun}; a proof for this is given in Appendix~\ref{app:late_time_bounds}.

The asymptotic result of Theorem~\ref{cor:product_decay} has an empirical extension: in both Gaussian and interacting settings (see Secs.~\ref{sec:single_body}-\ref{sec:many_body}), we find that $I(A{:}B)$ and $\mathcal{N}$ share peak, trough, and decay timescales throughout the full evolution, not only near the steady state. This tracking is not implied by Proposition~\ref{thm:factorization_joint} or Theorem~\ref{cor:product_decay}, which are asymptotic statements, but is instead established numerically in the following Sections. Appendix~\ref{app:quasiparticle} provides physical intuition for this tracking using a quasiparticle counting argument: both quantities receive contributions from the same entangled pairs shared across the cut and therefore rise, peak, and decay in tandem~\cite{AlbaCalabrese2019EPL,Bertini2022NegativityMI}.

When the Lindbladian respects a strong symmetry $S$~\cite{buca2012,albert2014} (i.e., $[L_\mu,S]=0$ for all $\mu$ and $[H,S]=0$), the dynamics is confined to fixed-charge sectors. Then, the NESS does not factorize across the bipartition because symmetry charge conservation introduces classical correlations between $A$ and $B$ that persist in the absence of quantum entanglement. One therefore finds
\begin{equation}
	\mathcal{N}(t)\xrightarrow{t\to\infty}0
	\quad\text{but}\quad
	I(A{:}B;t)\xrightarrow{t\to\infty}I_\infty>0,
	\label{eq:symmetry_plateau}
\end{equation}
consistent with Proposition~\ref{thm:factorization_joint} since the limiting state is not a product state. Any global conservation law or kinetic constraint that prevents asymptotic factorization sustains a non-zero plateau in $I(A{:}B)$ after $\mathcal{N}\to 0$; in such cases, the tracking of $\mathcal{N}$ by $I(A{:}B)$ breaks down at late times. This is demonstrated for the $S^z_{\mathrm{tot}}$--conserving XXZ chain under local $z$-dephasing in Sec.~\ref{sec:many_body}.
\\

Throughout this paper, we consider open boundary conditions and take the bipartition such that $A$ constitutes the leftmost $L_A$ sites. We compute $S_{1/2}(A)$, $I(A{:}B)$, and $\mathcal{N}(A{:}B)$ at each time step and use the unitary diagonal $\mathcal{N}=S_{1/2}$ as the benchmark against which the open-system trajectory is compared. The decoherence time $\tau_{\mathrm{d}}^{\mathrm{g}}$ is read off from the first loss of monotonicity in $\Gamma(t)$, and the long-time behavior of $I(A{:}B)$ is interpreted through Proposition~\ref{thm:factorization_joint} and Theorem~\ref{cor:product_decay} with attention to conservation laws that sustain classical correlations after entanglement has vanished.

\begin{figure*}[htbp]
\centering
\includegraphics[width=0.95\textwidth]{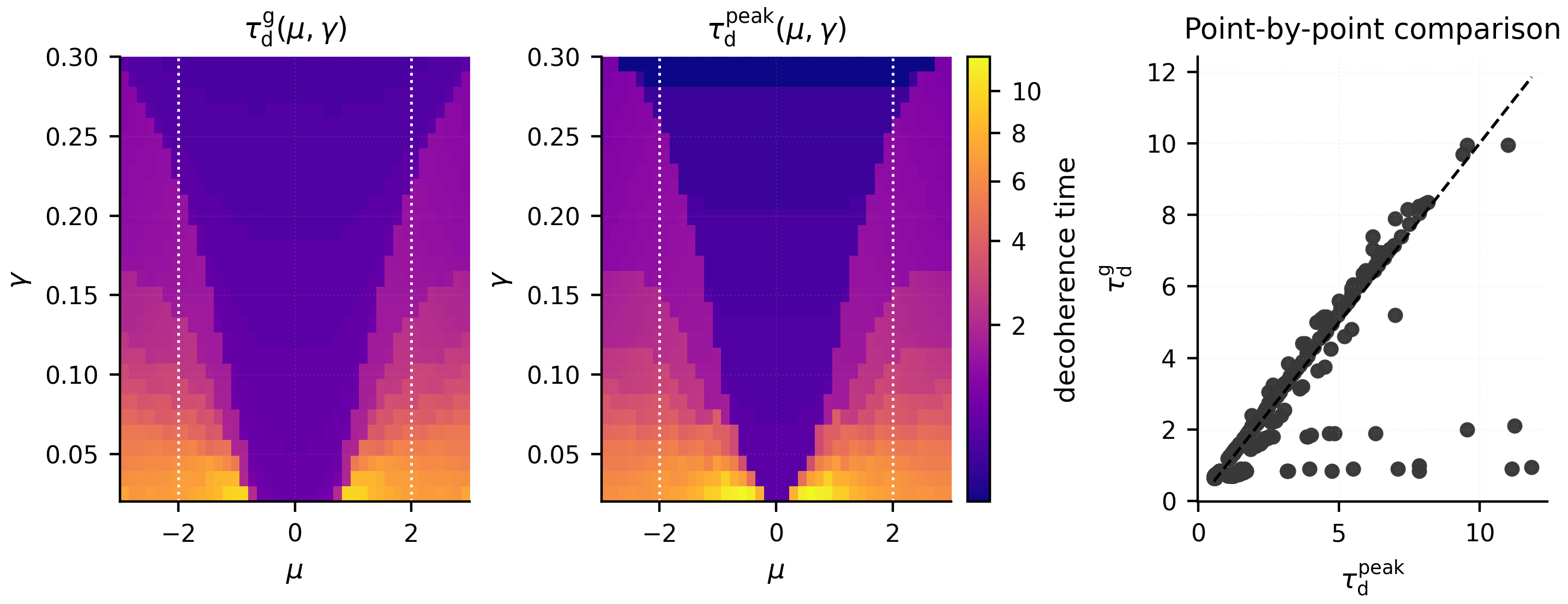}
\caption{Decoherence-time landscape for the balanced gain/loss Kitaev chain in the $(\mu,\gamma)$ plane. \emph{Left:} geometric decoherence time $\tau_{\mathrm{d}}^{\mathrm{g}}(\mu,\gamma)$. \emph{Center:} peak-based estimate $\tau_{\mathrm{d}}^{\mathrm{peak}}(\mu,\gamma)$. \emph{Right:} point-by-point comparison, with dashed diagonal $\tau_{\mathrm{d}}^{\mathrm{peak}}=\tau_{\mathrm{d}}^{\mathrm{g}}$. Vertical guide lines at $\mu=\pm 2J$ mark the closed-system topological phase boundaries. The topological phase $|\mu|<2J$ sustains longer $\tau_{\mathrm{d}}^{\mathrm{g}}$ than the trivial phase, and within the topological phase $\tau_{\mathrm{d}}^{\mathrm{g}}$ shows a local suppression at the chiral-symmetric point $\mu=0$ and decreases again toward $|\mu|=2J$. The comparison panel shows that $\tau_{\mathrm{d}}^{\mathrm{peak}}$ is correlated with $\tau_{\mathrm{d}}^{\mathrm{g}}$, while exhibiting qualitatively significant discrepancies in the weak-dissipation regime, as discussed in the main text. Parameters: $J=\Delta=1$, $L=96$, $L_A=\lfloor\sqrt{L}\rfloor=9$, N\'eel initial state, $\mu\in[-3,3]$ with step $0.15$, balanced gain/loss $\gamma_+=\gamma_-\equiv\gamma\in[0.02,0.30]$ with step $0.01$, time grid $t\in[0,12]$ with step $0.05$.}
\label{fig:gaussian_phase}
\end{figure*}

\section{Exactly Solvable Lindbladian Dynamics}
\label{sec:single_body}

We study the geometric decoherence time in the setting of exactly solvable Gaussian Lindbladian dynamics. For quadratic Hamiltonians and linear jump operators, the full many-body dynamics is encoded within the equal-time correlation matrix, and the diagnostics of Sec.~\ref{sec:theory}---the R\'enyi-$\tfrac{1}{2}$ entropy $S_{1/2}$, the fermionic Gaussian negativity $\mathcal{E}_F$, and the mutual information $I(A{:}B)$---can all be straightforwardly accessed. This setting offers two advantages over exact many-body evolution: it gives analytical control and enables thorough parameter scans over system size, subsystem size, chemical potential, and dissipation strength that are prohibitively expensive within the full Fock-space treatment.

\subsection{Model and Setup}
\label{subsec:model}

We consider the Kitaev chain~\cite{kitaev2001} on $L$ sites with open boundary conditions,
\begin{equation}
	\begin{aligned}
		H
		=
		-\mu\sum_{j=1}^{L}c_j^\dagger c_j
		& -J\sum_{j=1}^{L-1}\!\left(c_j^\dagger c_{j+1}+\mathrm{h.c.}\right) \\
		&+\Delta\sum_{j=1}^{L-1}\!\left(c_j c_{j+1}+\mathrm{h.c.}\right),
	\end{aligned}
	\label{eq:kitaev_ham_single_body}
\end{equation}
with $J>0$, $\Delta>0$, and we set $J=\Delta=1$ throughout. The topological phase occupies $|\mu|<2J$, with bulk gap closings at $|\mu|=2J$. Under open boundary conditions, the topological phase supports a pair of exponentially localized Majorana zero modes at the two ends of the chain, with localization length $\xi^{-1}=\ln|2J/\mu|$~\cite{kitaev2001}; the trivial phase $|\mu|>2J$ supports no such modes.
As we demonstrate below, these edge modes control the contrast in the geometric decoherence time between the topologically trivial and non-trivial phases.

In Nambu form, define the spinor $\Psi=(c_1,\ldots,c_L,c_1^\dagger,\ldots,c_L^\dagger)^T$, so that $H=\frac{1}{2}\Psi^\dagger\mathcal{H}_{\mathrm{BdG}}\Psi$ up to a constant, where $\mathcal{H}_{\mathrm{BdG}}$ is the $2L\times 2L$ Bogoliubov--de Gennes matrix. We couple the chain to a Markovian environment via single-site gain and loss jump operators
\begin{equation}
	L_{j,+}=\sqrt{\gamma_+}\,c_j^\dagger,
	\qquad
	L_{j,-}=\sqrt{\gamma_-}\,c_j,
\end{equation}
which preserve the Gaussian character of the state~\cite{Prosen2008ThirdQuantization,Prosen2010ThirdQuantization}. The full dynamics is therefore exactly encoded in the equal-time Nambu correlation matrix $C_{ab}(t)=\langle\Psi_a\Psi_b^\dagger\rangle_t$, which obeys the linear equation of motion
\begin{equation}
	\frac{dC}{dt}
	=
	-i[\mathcal{H}_{\mathrm{BdG}},C]
	-({\gamma_+}+{\gamma_-})C
	+
	\begin{pmatrix}
		\gamma_-\,\mathds{1}_L & 0\\
		0 & \gamma_+\,\mathds{1}_L
	\end{pmatrix}.
	\label{eq:correlation_matrix_ode}
\end{equation}
The three terms represent, respectively, unitary BdG rotation, uniform damping of all two-point correlators at rate $\gamma_++\gamma_-$, and a source that injects particles and holes to drive the system toward a mixed NESS with stationary correlation matrix (for either balanced gain/loss case or number-conserving case)
\begin{equation}
	C_\infty
	=
	\begin{pmatrix}
		\dfrac{\gamma_-}{\gamma_++\gamma_-}\,\mathds{1}_L & 0\\[6pt]
		0 & \dfrac{\gamma_+}{\gamma_++\gamma_-}\,\mathds{1}_L
	\end{pmatrix}.
	\label{eq:gaussian_steady_correlation}
\end{equation}
A derivation of Eq.~\eqref{eq:correlation_matrix_ode} is given in Appendix~\ref{app:derivation:eom}.

In the balanced case $\gamma_+=\gamma_-\equiv\gamma$, the steady state is the maximally mixed Gaussian state $C_\infty=\frac{1}{2}\mathds{1}_{2L}$, and Eq.~\eqref{eq:correlation_matrix_ode} admits the closed-form solution
\begin{equation}
	C(t)
	=
	e^{-2\gamma t}\,e^{-i\mathcal{H}_{\mathrm{BdG}}t}\,C(0)\,
	e^{+i\mathcal{H}_{\mathrm{BdG}}t}
	+
	\bigl(1-e^{-2\gamma t}\bigr)\frac{\mathds{1}_{2L}}{2}.
	\label{eq:balanced_closed_form}
\end{equation}
Each matrix element of the unitarily evolving initial correlation matrix is multiplied by the common damping factor $e^{-2\gamma t}$, while the $\mathds{1}_{2L}/2$ term interpolates the state toward the infinite-temperature Gaussian fixed point. Eq.~\eqref{eq:balanced_closed_form} gives analytical control: one diagonalizes $\mathcal{H}_{\mathrm{BdG}}$ once, evolves phase factors spectrally, and applies a single exponential envelope. For imbalanced gain and loss the closed-form simplification no longer applies and we integrate Eq.~\eqref{eq:correlation_matrix_ode} with a high-order ODE solver. Benchmarks against exact many-body Lindblad evolution at small system sizes, and explicit balanced-versus-imbalanced comparisons, are collected in Appendix~\ref{app:benchmarks}.

We initialize the chain in the N\'eel state $|\psi_0\rangle=|1010\cdots\rangle$, corresponding to a diagonal initial correlation matrix built from the occupied even sites; we have verified that the qualitative phenomenology is insensitive to this choice. At each time step we compute three observables for the bipartition $A|B$, where $A$ is the leftmost $L_A$ sites: the R\'enyi-$\tfrac{1}{2}$ entropy $S_{1/2}(A)$, the fermionic Gaussian negativity $\mathcal{E}_F(A{:}B)$, and the von Neumann mutual information $I(A{:}B)$. The geometric decoherence time $\tau_{\mathrm{d}}^{\mathrm{g}}$ is identified from the parametric trajectory $\Gamma(t)=(S_{1/2}(t),\mathcal{E}_F(t))$ via Definition~\ref{def:geom_dec}.

\subsection{Results}
\label{subsec:results}

\paragraph*{Phase diagram.}
Figure~\ref{fig:gaussian_phase} shows the geometric decoherence time $\tau_{\mathrm{d}}^{\mathrm{g}}(\mu,\gamma)$ together with the peak-based estimate (see Eq~\eqref{eq:peak_dec})
over the $(\mu,\gamma)$ plane for balanced gain and loss, using the N\'eel initial state throughout.

The main result is that the topological phase sustains longer $\tau_{\mathrm{d}}^{\mathrm{g}}$ than the trivial phase at every dissipation strength studied. Within the trivial phase, $\tau_{\mathrm{d}}^{\mathrm{g}}$ is short and weakly dependent on $\mu$. Within the topological phase, the behaviour is non-monotonic in $\mu$: $\tau_{\mathrm{d}}^{\mathrm{g}}$ reaches a local minimum at the chiral-symmetric point $\mu=0$ (where $J=\Delta$), rises to a local maximum on either side, and decreases again as $|\mu|\to 2J$ toward the phase boundary.


The comparison panel rightmost in Fig.~\ref{fig:gaussian_phase} plots $\tau_{\mathrm{d}}^{\mathrm{peak}}$ against $\tau_{\mathrm{d}}^{\mathrm{g}}$ point by point. Many parameter values fall close to the diagonal, confirming that the negativity peak is a reasonable operational proxy over much of parameter space. However, the most important outliers occur at weak dissipation within the topological phase, where $\tau_{\mathrm{d}}^{\mathrm{peak}}$ can substantially overshoot $\tau_{\mathrm{d}}^{\mathrm{g}}$ because the negativity peak lags the geometric onset. By contrast, at fixed moderate dissipation varying $\mu$ produces only a slight undershoot with little qualitative change between representative cuts; we relegate a discussion of this secondary effect to Appendix~\ref{app:gaussian_fixed_gamma_mu}. This confirms that the two definitions are logically and quantitatively distinct, and that $\tau_{\mathrm{d}}^{\mathrm{g}}$ is a sharper diagnostic.

\begin{figure*}[htbp]
	\centering
	\includegraphics[width=\textwidth]{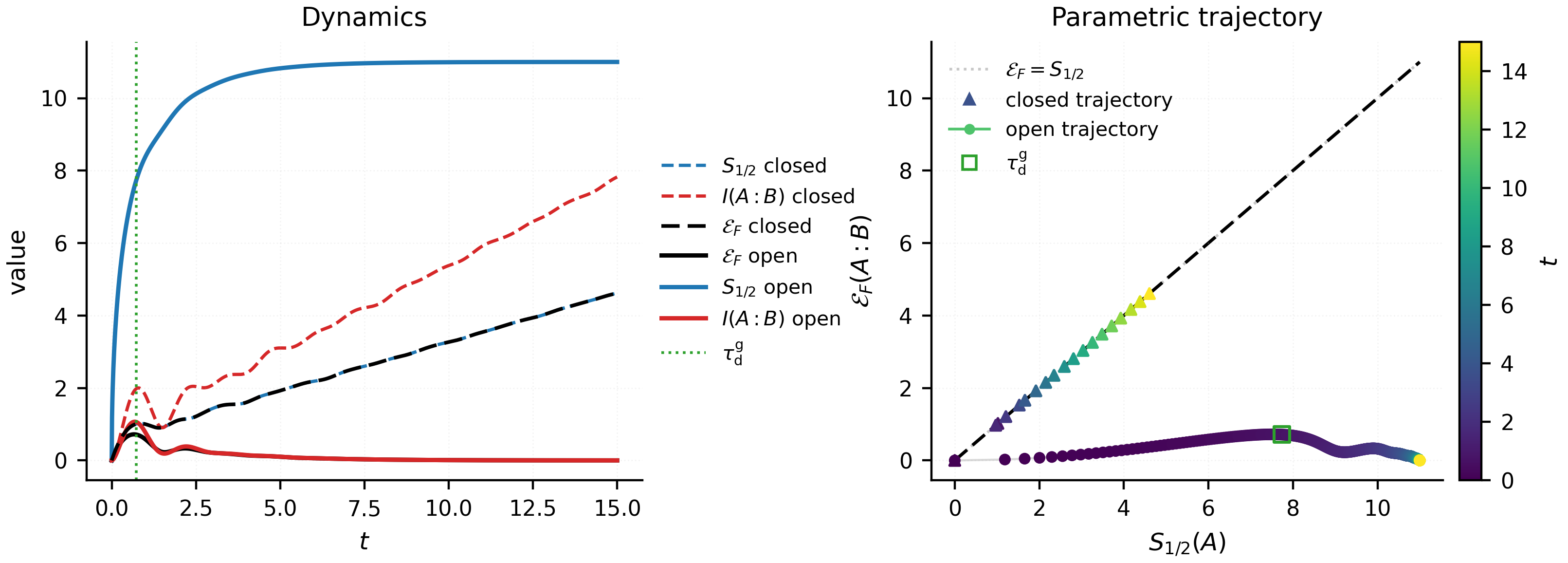}
\caption{Geometric decoherence time in the topological phase of the open Kitaev chain ($\mu=0.5$, $J=\Delta=1$, $L=128$, $L_A=11$, balanced gain/loss $\gamma=0.15$, N\'eel initial state). Left: time evolution of $\mathcal{E}_F$ (fermionic Gaussian negativity), $S_{1/2}(A)$, and $I(A{:}B)$ under unitary dynamics (dashed) and Lindbladian dynamics (solid); the vertical marker indicates $\tau_{\mathrm{d}}^{\mathrm{g}}$. Right: parametric trajectory $\Gamma(t)=(S_{1/2},\mathcal{E}_F)$ colored by time; triangles denote unitary dynamics, circles denote Lindbladian dynamics. Under unitary dynamics, the trajectory is pinned to the diagonal; under Lindbladian dynamics it bends away at $\tau_{\mathrm{d}}^{\mathrm{g}}$ (square marker). At late times $\mathcal{E}_F$ and $I(A{:}B)$ both vanish, signaling asymptotic factorization across the cut (Proposition~\ref{thm:factorization_joint}).}
	\label{fig:gaussian_story}
\end{figure*}

\paragraph*{Representative dynamics.}
Figure~\ref{fig:gaussian_story} illustrates the breakdown of the monotone relation that defines $\tau_{\mathrm{d}}^{\mathrm{g}}$ for a representative point in the topological phase ($\mu=0.5$, $\gamma=0.15$, $L=128$, $L_A=11$). The left panel shows the time evolution of $\mathcal{E}_F$, $S_{1/2}(A)$, and $I(A{:}B)$ under unitary dynamics (dashed) and Lindbladian dynamics with balanced gain/loss (solid). The right panel shows the parametric trajectory $\Gamma(t)=(S_{1/2}(t),\mathcal{E}_F(t))$, with unitary dynamics marked by triangles and Lindbladian dynamics marked by circles, colored by time.

Under unitary dynamics from the N\'eel initial state, $\Gamma(t)$ lies on the diagonal $\mathcal{E}_F=S_{1/2}$ at all times, consistent with the pure-state identity Eq.~\eqref{eq:pure_identity}. Under balanced gain/loss, the early-time trajectory follows the same monotonic growth: both $S_{1/2}$ and $\mathcal{E}_F$ rise together. At $t=\tau_{\mathrm{d}}^{\mathrm{g}}$ (vertical marker in the left panel; square marker in the right panel), the monotone relation first fails: $S_{1/2}$ continues to grow due to ongoing entropy production, while $\mathcal{E}_F$ begins to decrease, signalling the geometric onset of decoherence~\ref{def:geom_dec}. At late times, $\mathcal{E}_F\to 0$ and $I(A{:}B)\to 0$ together at the same rate, consistent with Proposition~\ref{thm:factorization_joint} and Theorem~\ref{cor:product_decay}, while $S_{1/2}$ saturates to a volume-law scale set by the boundary block.

\paragraph*{Mutual information tracking.}
Figure~\ref{fig:gaussian_tracking} shows $I(A{:}B;t)$ and $\mathcal{E}_F(t)$ on the same axes for the same representative parameters. The two quantities differ in magnitude, but share the same early-time growth, the same peak time to within the time-grid resolution ($\Delta t=0.01$), and the same asymptotic decay rate---exponential fits over the second half of the time window confirm $\lambda_{\mathcal{E}_F}\approx\lambda_{I}\approx 0.50$---consistent with Theorem~\ref{cor:product_decay}. This agreement in peak time, intermediate-time features, and decay envelope is the empirical content of the tracking claim of Sec.~\ref{subsec:MI_asymptotics}: in the absence of a strong symmetry that prevents asymptotic factorization, $I(A{:}B)$ reproduces the characteristic timescales of $\mathcal{E}_F$ throughout the nonequilibrium evolution, not only near the steady state. The robustness of this agreement under changes of bipartition and time-step resolution is verified in Appendix~\ref{app:benchmarks}, where we repeat the comparison at $L_A=64$ (half-chain cut) and at $\Delta t=0.005$, finding the same peak and decay times to within numerical precision.

\begin{figure}[htbp]
	\centering
	\includegraphics[width=0.82\columnwidth]{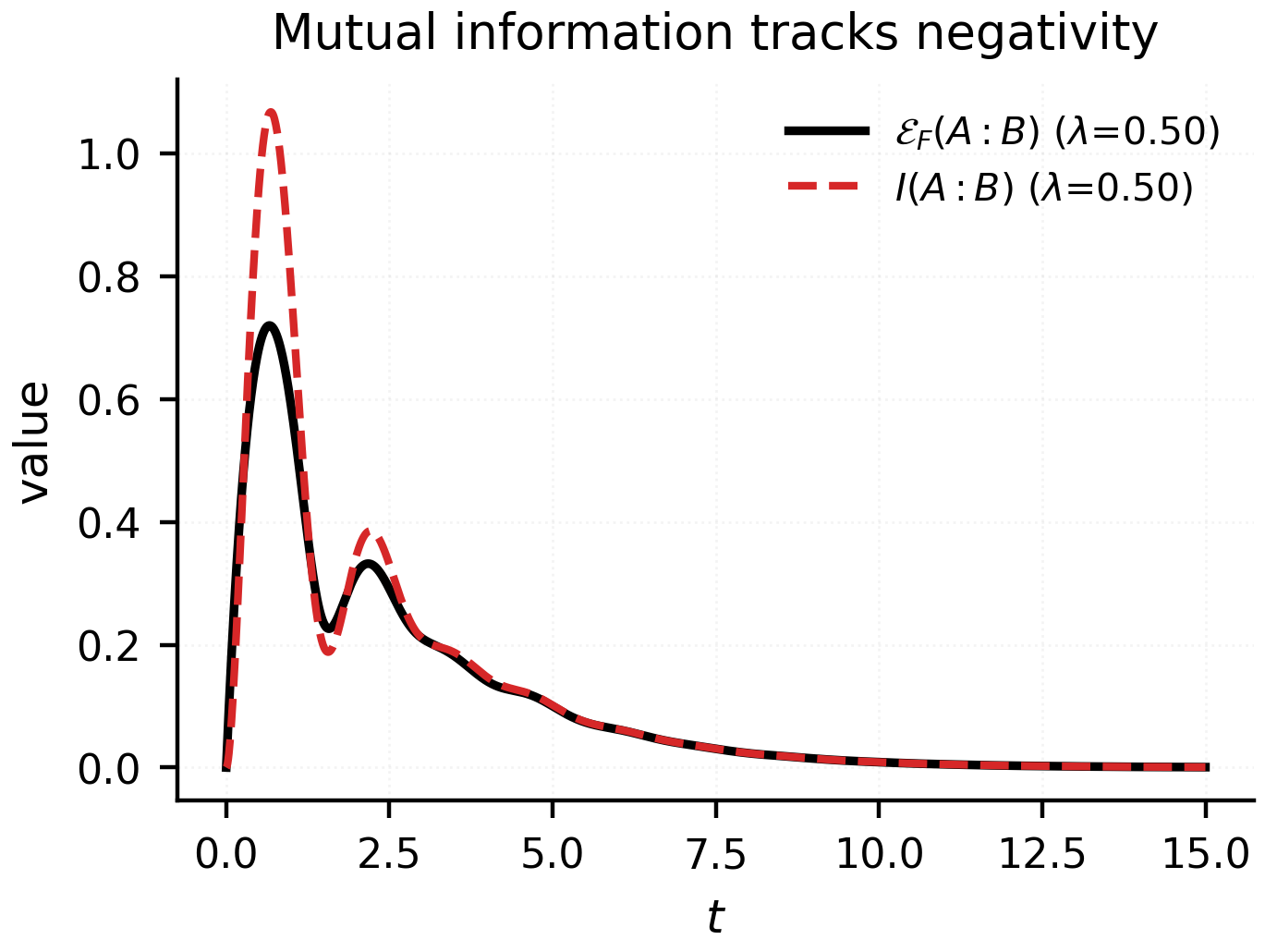}
\caption{Mutual information $I(A{:}B)$ and fermionic Gaussian negativity $\mathcal{E}_F$ as functions of time for the same parameters as Fig.~\ref{fig:gaussian_story}. The two quantities share the same peak and trough times as well as the same asymptotic decay rate, consistent with Theorem~\ref{cor:product_decay} (see Sec.~\ref{subsec:MI_asymptotics}). Exponential fits over the second half of the time window yield $\lambda_{\mathcal{E}_F} \approx \lambda_{I} \approx 0.50$, confirming a shared decay exponent to two significant figures.}
	\label{fig:gaussian_tracking}
\end{figure}

\subsection{Analysis}
\label{subsec:analysis}

\paragraph*{Slower Decoherence in the Topological Phase.}
The contrast between the topologically trivial and non-trivial phases shown in Fig.~\ref{fig:gaussian_phase} can be understood as follows. The linear super-operator governing the homogeneous part of Eq.~\eqref{eq:correlation_matrix_ode} has eigenvalues $\lambda_{mn}=-2\gamma-i(E_m-E_n)$, where $E_1,\ldots,E_{2L}$ are the eigenvalues of $\mathcal{H}_{\mathrm{BdG}}$. The Liouvillian spectral gap in the covariance sector is therefore $\Delta_{\mathrm{cov}}=2\gamma$, independent of $\mu$, $J$, and $\Delta$ (Appendix~\ref{app:derivation:gap}). The gap is identical in the topological phase, the trivial phase, and at the critical point $|\mu|=2J$: it cannot be the source of the different behaviour between the two phases. Any $\mu$-dependent variation of $\tau_{\mathrm{d}}^{\mathrm{g}}$ must therefore originate entirely from the \emph{unitary} part of the dynamics encoded in $e^{-i\mathcal{H}_{\mathrm{BdG}}t}$.

In the topological phase $|\mu|<2J$, the chain supports a pair of exponentially localized Majorana zero modes at its ends, with localization length $\xi^{-1}=\ln|2J/\mu|$~\cite{kitaev2001}. These modes generate nonlocal, spatially coherent correlations across the boundary bipartition that are largely insensitive to the on-site gain/loss perturbation: the jump operators $L_{j,\pm}$ act site by site and couple only weakly to modes that are exponentially small at any individual site. In the trivial phase, no such modes exist; correlations are bulk-dominated and decay on the scale of the correlation length, making them substantially more susceptible to local dissipation. This explains why the topological phase sustains larger $\tau_{\mathrm{d}}^{\mathrm{g}}$ at every value of $\gamma$.
 
The non-monotonic variation within the topological phase---a local minimum at $\mu=0$ flanked by maxima---can be attributed to the chiral symmetry at $\mu=0$ (with $J=\Delta$), which enforces an exact degeneracy between the two zero-mode branches. This degeneracy opens an additional channel for decoherence: the environment can couple the degenerate zero-mode subspace and suppress bipartite entanglement more efficiently than at generic $\mu$, shortening $\tau_{\mathrm{d}}^{\mathrm{g}}$. Moving away from $\mu=0$ lifts this degeneracy, reducing the susceptibility to dissipation and lengthening $\tau_{\mathrm{d}}^{\mathrm{g}}$. As $|\mu|\to 2J$ from within the topological phase, the bulk gap closes, the zero modes merge into the bulk continuum, and $\tau_{\mathrm{d}}^{\mathrm{g}}$ decreases again before dropping sharply across the phase boundary.

A leading-order perturbative expression for $\tau_{\mathrm{d}}^{\mathrm{g}}$ at small $\gamma$ follows from expanding the exact solution Eq.~\eqref{eq:balanced_closed_form} around the first local maximum $t_*$ of the unitary entanglement $s(t) = \mathcal{E}_F(t;\gamma=0) = S_{1/2}(t;\gamma=0)$:
\begin{equation}
	\tau_{\mathrm{d}}^{\mathrm{g}}
	= t_*
	- \gamma\,\frac{n_1'(t_*)}{s''(t_*)}
	+ O(\gamma^2),
	\label{eq:tau_perturbative}
\end{equation}
where $s''(t_*)<0$ is the curvature of the pure-state entanglement maximum, and $n_1(t) := \left.\frac{d}{d\gamma}\mathcal{E}_F(t;\gamma)\right|_{\gamma=0}$ is the linear-response of $\mathcal{E}_F$ to dissipation at fixed time. Accordingly, $n_1'(t_*)$ is the first-order response of $\mathcal{E}_F$ to $\gamma$ differentiated with respect to time at $t_*$. Since the balanced dissipator contributes the strictly decreasing envelope  $e^{-2\gamma t}$, the first-order suppression $n_1(t)$ grows in magnitude  monotonically with $t$ up to $t_*$, so $n_1'(t_*)<0$ (as verified  numerically); combined with $s''(t_*)<0$, the ratio $n_1'(t_*)/s''(t_*)>0$ and the shift is negative, confirming $\tau_{\mathrm{d}}^{\mathrm{g}}<t_*$ (as also verified numerically). Both $s''$ and $n_1'$ are purely unitary quantities determined by the $\gamma=0$ trajectory; $\gamma$ enters only as the prefactor of the shift. Here $t_*$ denotes the \emph{first} local maximum of the unitary $\mathcal{E}_F(t)$ before the onset of the first dip, not necessarily the global maximum; when the unitary trajectory has an early local maximum well before the global peak, it is this first local maximum that controls $\tau_{\mathrm{d}}^{\mathrm{g}}$ and to which the expansion applies. The derivation and consistency conditions are given in Appendix~\ref{app:perturbative}. The $\mu$-dependent sensitivity is encoded in the ratio $n_1'(t_*)/s''(t_*)$, whose variation with $\mu$ underlies the phase-dependent contrast in Fig.~\ref{fig:gaussian_phase}.

\begin{figure}[t]
	\centering
	\begin{subfigure}[t]{0.48\textwidth}
		\centering
		\includegraphics[width=\linewidth]{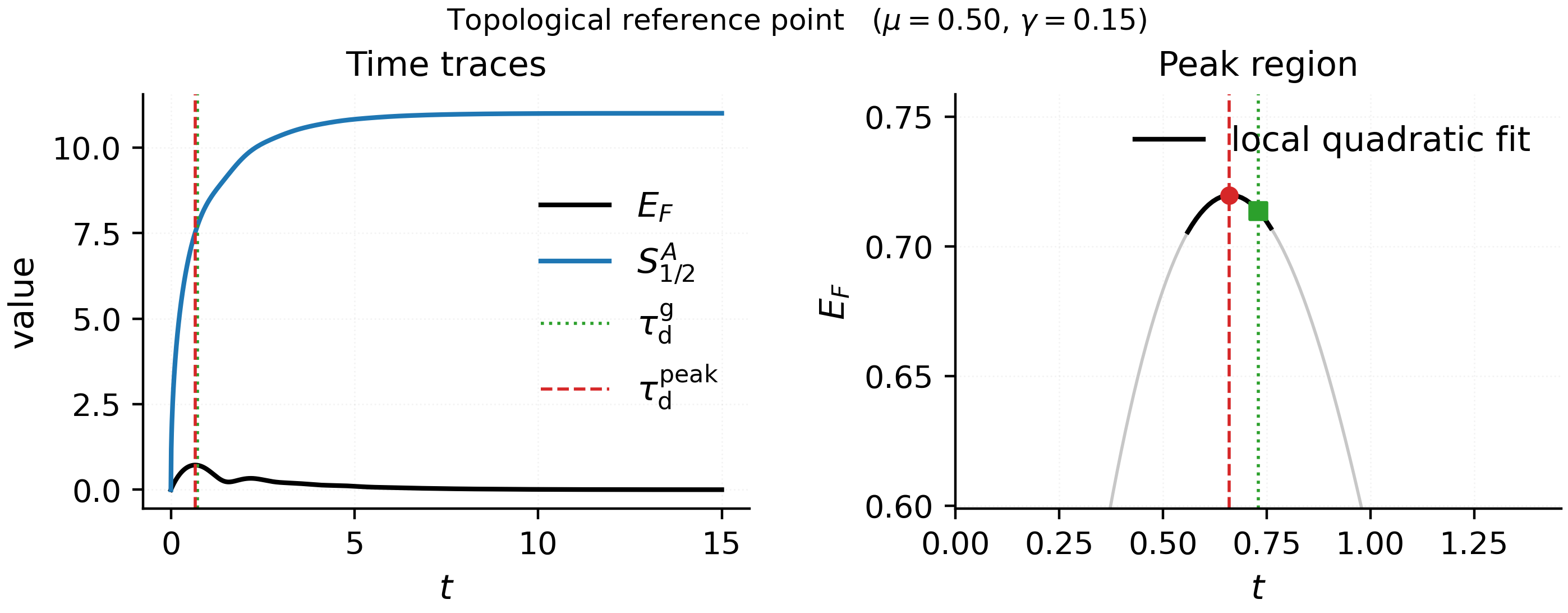}
		\caption{}
	\end{subfigure}\hfill
	\begin{subfigure}[t]{0.48\textwidth}
		\centering
		\includegraphics[width=\linewidth]{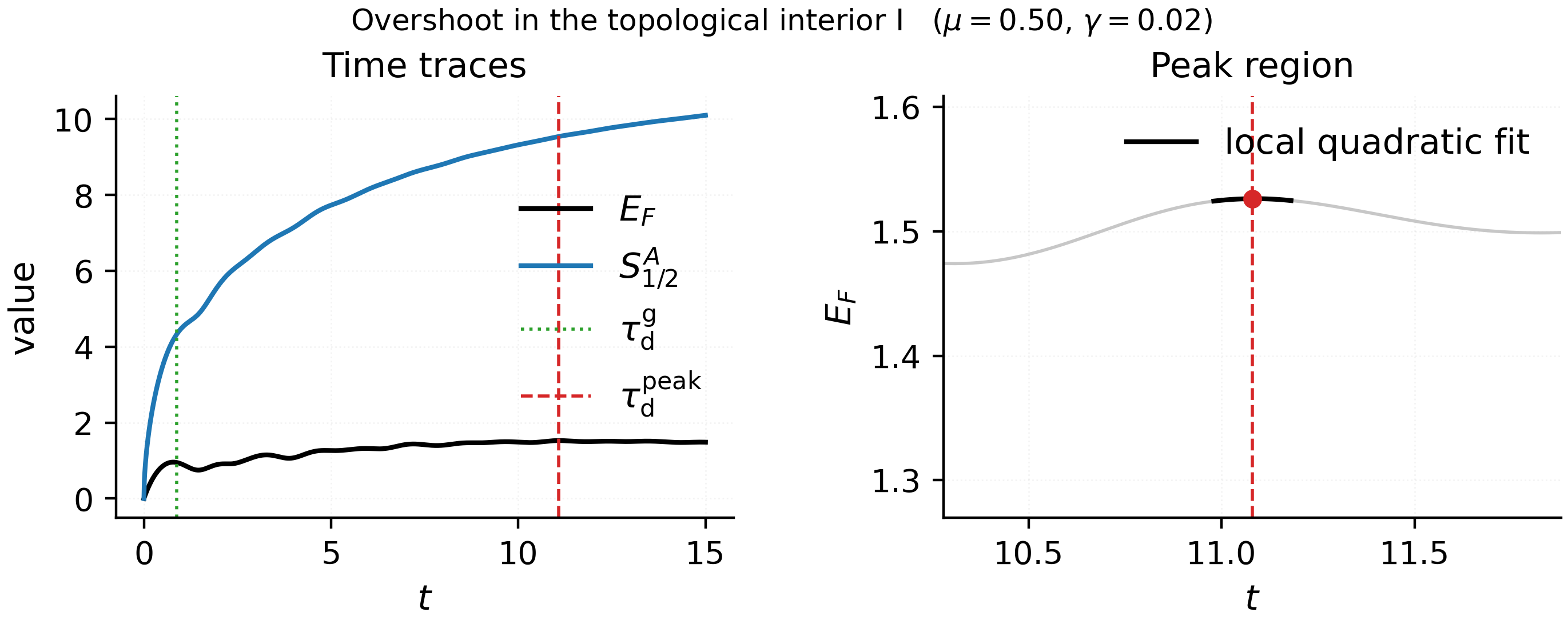}
		\caption{}
	\end{subfigure}
	\caption{Fixed-$\mu$ comparison within the topological phase. Left: topological reference point, $\mu=0.50$ and $\gamma=0.15$. Right: weaker dissipation, $\mu=0.50$ and $\gamma=0.02$. The weak-dissipation panel shows the clear overshoot of $\tau_{\mathrm{d}}^{\mathrm{peak}}$ relative to $\tau_{\mathrm{d}}^{\mathrm{g}}$ where the peak is attained at much later time and is comparatively flat. In both panels $\tau_{\mathrm{d}}^{\mathrm{g}}$ is extracted from Definition~\ref{def:geom_dec} via the entropy--negativity trajectory, which is not shown here.}
	\label{fig:gaussian_fixed_mu_gamma_compare}
\end{figure}

\begin{figure*}[htbp]
	\centering
	\begin{subfigure}[t]{0.48\textwidth}
		\centering
		\includegraphics[width=\linewidth]{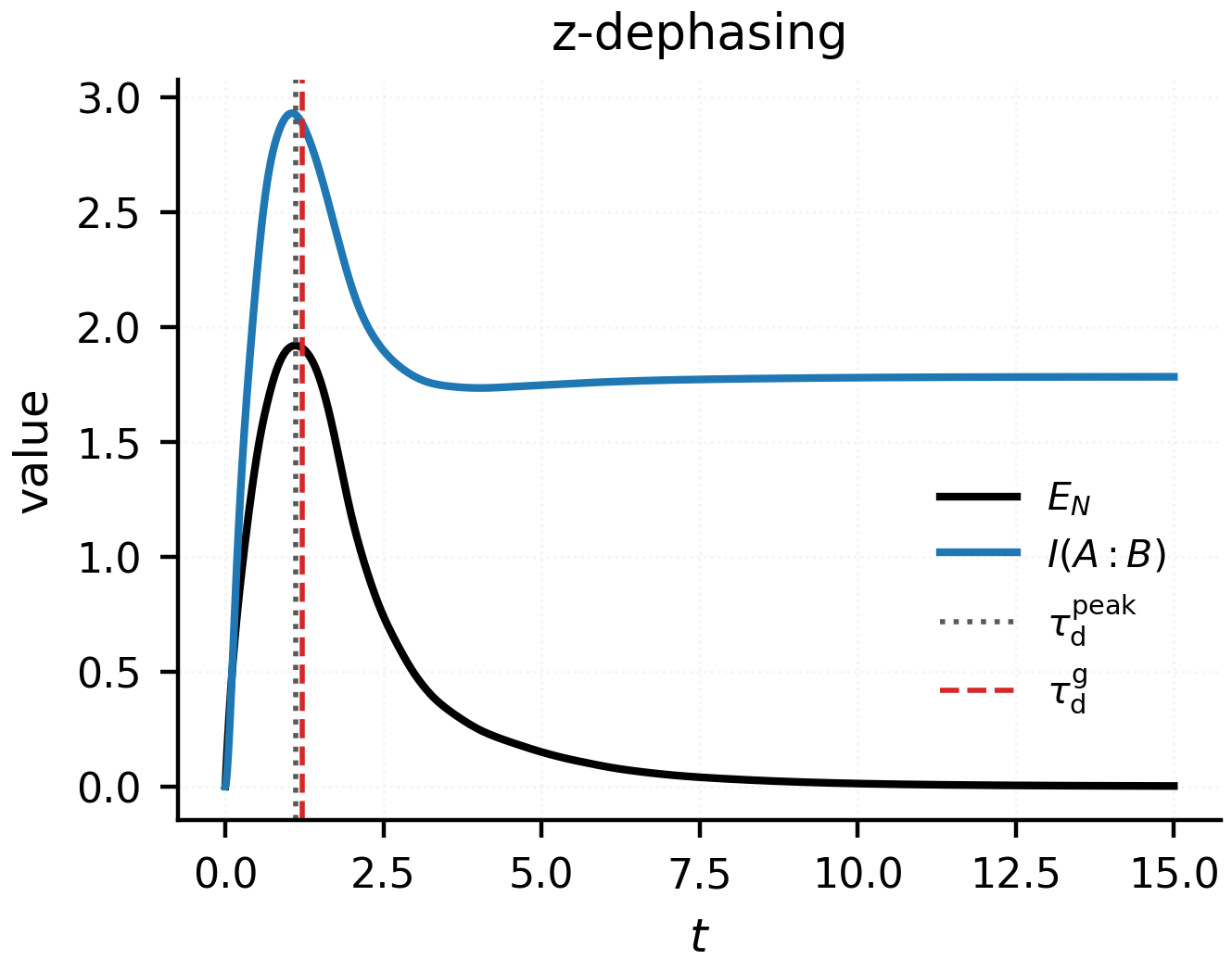}
		\caption{}
	\end{subfigure}\hfill
	\begin{subfigure}[t]{0.48\textwidth}
		\centering
		\includegraphics[width=\linewidth]{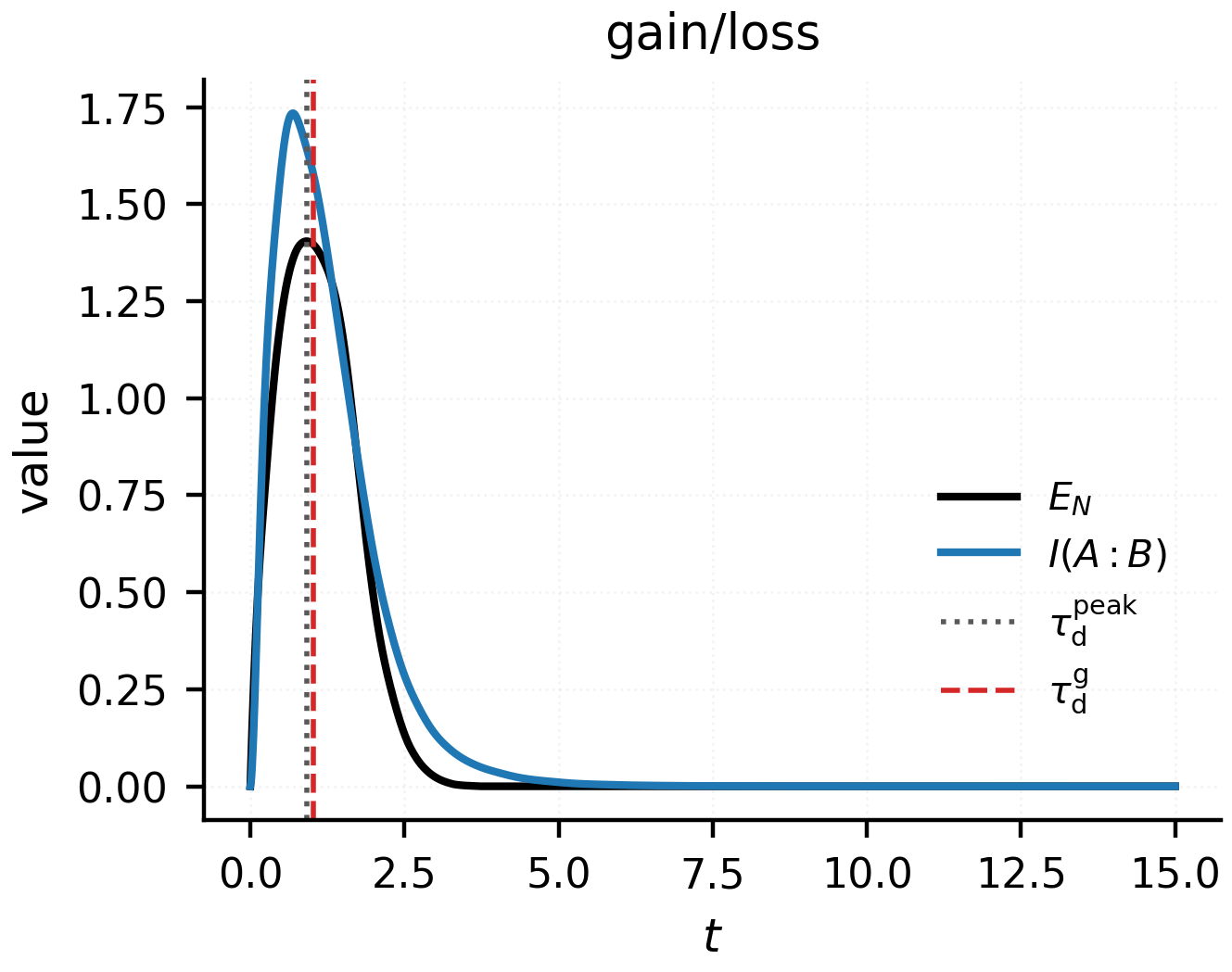}
		\caption{}
	\end{subfigure}
\caption{Time evolution of exact logarithmic negativity $E_N$ and von Neumann mutual information $I(A{:}B)$ for the interacting XXZ chain under two dissipative protocols. (a)~Under local $z$-dephasing (which preserves a strong U(1) symmetry), $E_N\to 0$ while $I(A{:}B)$ saturates to a nonzero plateau, signaling a separable but non-product NESS. The mutual-information tracking implied by Proposition~\ref{thm:factorization_joint} and Theorem~\ref{cor:product_decay} does not apply because the NESS does not factorize. (b)~Under local gain/loss, which removes the $S^z_{\mathrm{tot}}$ conservation, both $E_N$ and $I(A{:}B)$ decay to zero with aligned peak and decay times, consistent with asymptotic factorization (Proposition~\ref{thm:factorization_joint} and Theorem~\ref{cor:product_decay}). Parameters: open XXZ chain, $L=10$, half-chain bipartition $L_A=5$, $J=1$, $J_z=0.55$, N\'eel initial state, $t\in[0,15]$ with step $0.01$; panel~(a) uses $L_j=\sqrt{\gamma_z}\,\sigma_j^z$ with $\gamma_z=0.15$, panel~(b) uses $L_j^{(\pm)}=\sqrt{\gamma_\pm}\,\sigma_j^\pm$ with $\gamma_+=\gamma_-=0.15$.}
	\label{fig:many_body_tracking}
\end{figure*}

\paragraph*{Why $\tau_{\mathrm{d}}^{\mathrm{peak}}$ can fail.}
The main failure mode is a substantial overshoot at weak dissipation in the topological phase. This is seen by comparing the two fixed-$\mu$ cuts in Fig.~\ref{fig:gaussian_fixed_mu_gamma_compare} at $(\mu,\gamma)=(0.50,0.15)$ and $(0.50,0.02)$: lowering $\gamma$ leaves the geometric onset identifiable through Definition~\ref{def:geom_dec}, but the negativity curve develops a delayed turnover so that $\tau_{\mathrm{d}}^{\mathrm{peak}}$ occurs well after $\tau_{\mathrm{d}}^{\mathrm{g}}$. By contrast, fixing $\gamma$ and varying $\mu$ produces only a slight undershoot, with little qualitative change between the two cuts; we show this secondary effect in Appendix Fig.~\ref{fig:appendix_gaussian_fixed_gamma_mu}. Both effects reflect the same general point: $\tau_{\mathrm{d}}^{\mathrm{peak}}$ is defined by a later extremum of a single time trace, whereas $\tau_{\mathrm{d}}^{\mathrm{g}}$ is extracted from the geometry of the entropy--negativity trajectory $\Gamma(t)$.

\begin{figure*}[htbp]
	\centering
	\begin{subfigure}[t]{0.48\textwidth}
		\centering
		\includegraphics[width=\linewidth]{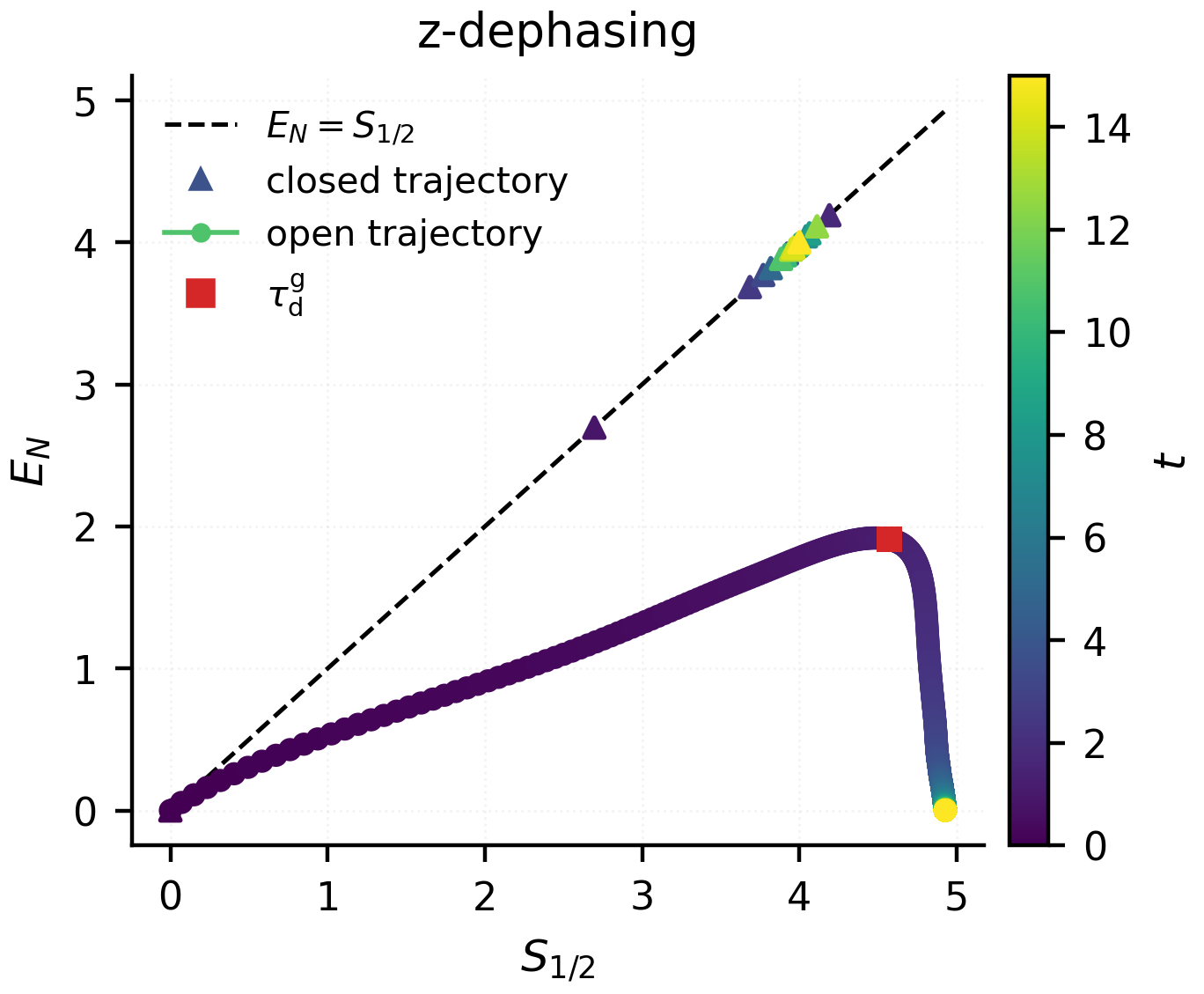}
		\caption{}
	\end{subfigure}\hfill
	\begin{subfigure}[t]{0.48\textwidth}
		\centering
		\includegraphics[width=\linewidth]{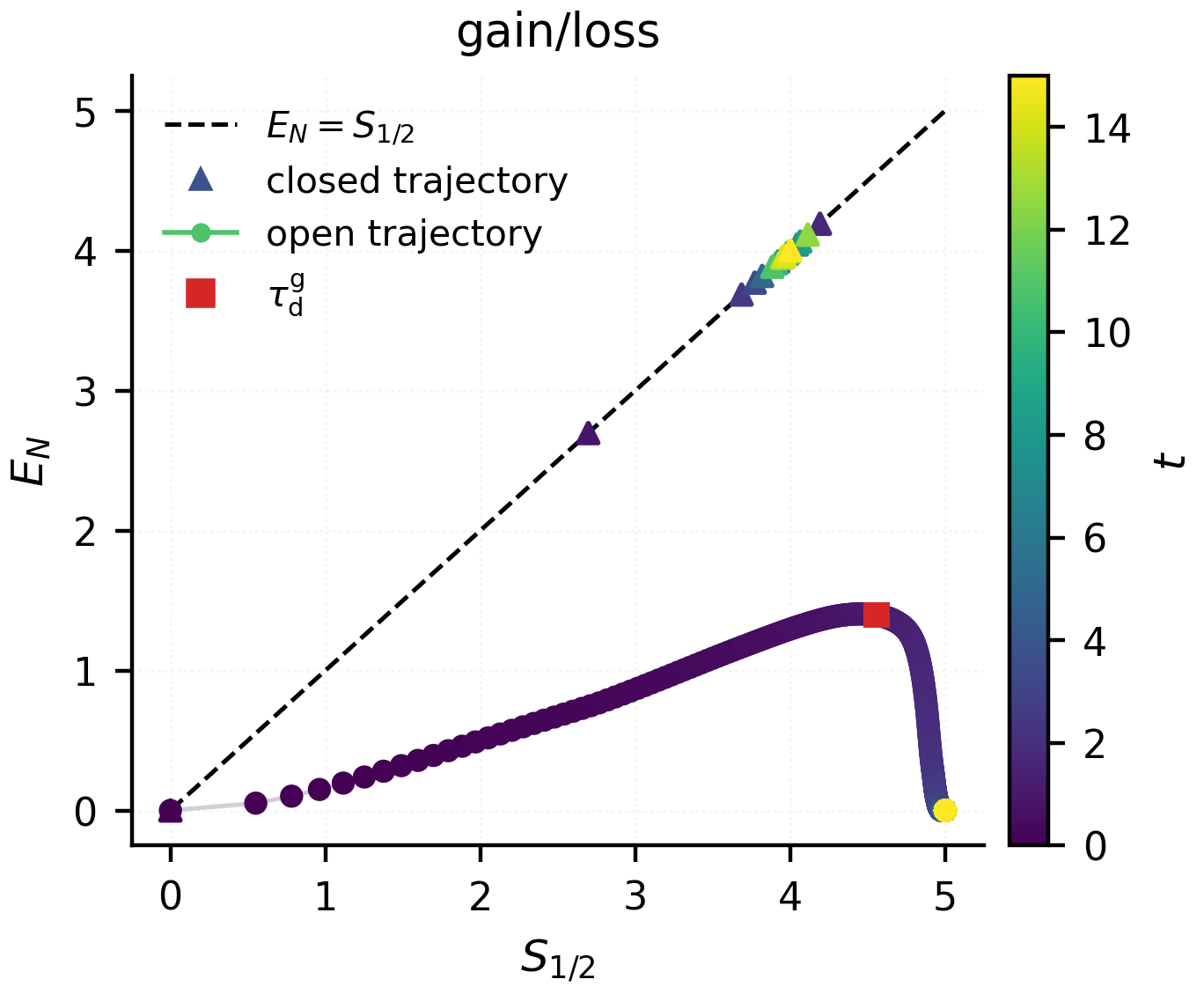}
		\caption{}
	\end{subfigure}
\caption{Entropy--negativity trajectories $\Gamma(t)=(S_{1/2}(t),E_N(t))$ for the two protocols of Fig.~\ref{fig:many_body_tracking}, colored by time. Both trajectories depart from the pure-state diagonal at $\tau_{\mathrm{d}}^{\mathrm{g}}$ (square marker), the earliest time at which $E_N$ decreases while $S_{1/2}$ continues to grow (Definition~\ref{def:geom_dec}). (a)~Under $z$-dephasing the trajectory respects conservation-law-protected classical correlations in the NESS. (b)~Under gain/loss the trajectory leads to a product NESS. Parameters as in Fig.~\ref{fig:many_body_tracking}.}
	\label{fig:many_body_parametric}
\end{figure*}

\paragraph*{Finite-size and bipartition analysis.}
Appendix~\ref{app:finite-size} collects two complementary scaling analyses: $\tau_{\mathrm{d}}^{\mathrm{g}}$ and the peak negativity as functions of $L_A$ at fixed $L=128$, and as functions of $L$ at fixed $L_A=16$. Both quantities remain flat across the ranges studied ($L_A\in\{3,5,\ldots,19\}$, $L\in\{64,96,\ldots,256\}$), confirming that the geometric decoherence scale has converged with respect to both system size and bipartition choice. The robustness of the mutual information tracking under changes of bipartition and time-step resolution is verified in Appendix~\ref{app:benchmarks}. These verify that the results in Fig.~\ref{fig:gaussian_phase} reflect the thermodynamic-limit dynamics and are not finite-size artifacts.

\section{Interacting Lindbladian Dynamics}
\label{sec:many_body}

We now test the framework of Sec.~\ref{sec:theory} beyond the Gaussian setting, using exact many-body Lindblad evolution of the interacting XXZ spin chain. The central question is whether the geometric decoherence criterion and the mutual-information tracking established analytically in Sec.~\ref{sec:theory} extend to the interacting regime, and how a strong symmetry alters the long-time behavior.

\paragraph*{Model.}
We consider the spin-$\tfrac12$ XXZ chain on $L$ sites with open boundary conditions,
\begin{equation}
	H_{\mathrm{XXZ}}
	=
	J\sum_{j=1}^{L-1}
	\!\left(S_j^x S_{j+1}^x + S_j^y S_{j+1}^y\right)
	+
	J_z \sum_{j=1}^{L-1} S_j^z S_{j+1}^z,
	\label{eq:xxz_ham}
\end{equation}
with $J=1$ and $J_z=0.55$ throughout. The Hamiltonian conserves total magnetization, $[H_{\mathrm{XXZ}},S^z_{\mathrm{tot}}]=0$, with $S^z_{\mathrm{tot}}=\sum_j S_j^z$.

We compare two dissipative protocols. For local $z$-dephasing, with jump operators $L_j=\sqrt{\gamma_z}\,\sigma_j^z$ that commute with $S^z_{\mathrm{tot}}$, the full Lindbladian posseses a strong U(1) symmetry~\cite{buca2012,albert2014}: both $[H_{\mathrm{XXZ}}, S^z_{\mathrm{tot}}]=0$ and $[L_j, S^z_{\mathrm{tot}}]=0$ hold for all $j$. The dynamics is consequently confined to fixed-magnetization sectors and the NESS cannot factorize across any bipartition that cuts bonds between $A$ and $B$, because total magnetization conservation enforces classical correlations between the two subsystems even in the absence of quantum entanglement. In contrast, local gain/loss, with jump operators $L_j^{(\pm)}=\sqrt{\gamma_\pm}\,\sigma_j^\pm$, does not commute with $S^z_{\mathrm{tot}}$ and magnetization is no longer conserved. In this case, only a weak U(1) symmetry is present and the typical route to asymptotic factorization across the bipartition is available.

\paragraph*{Results.}
Figure~\ref{fig:many_body_tracking} shows the time evolution of $E_N$ and $I(A{:}B)$ for both protocols at $L=10$, half-chain bipartition $L_A=5$, and the N\'eel initial state $|\psi_0\rangle=|\!\uparrow\downarrow\uparrow\downarrow\cdots\rangle$.

Under $z$-dephasing (Fig.~\ref{fig:many_body_tracking}(a)), $E_N$ decays to zero while $I(A{:}B)$ saturates to a nonzero plateau $I_\infty>0$. This is precisely the behavior anticipated by Eq.~\eqref{eq:symmetry_plateau}: the NESS does not factorize across the bipartition because total magnetization conservation forces finite classical correlations between $A$ and $B$. 
By Proposition~\ref{thm:factorization_joint}, the nonzero plateau signals that the NESS is separable but not a product state; Theorem 1 does not apply because its conclusion requires factorization.

Under gain/loss (Fig.~\ref{fig:many_body_tracking}(b)), the strong symmetry is absent. Both $E_N$ and $I(A{:}B)$ decay to zero at long times, and their peak times and decay timescales remain closely aligned throughout the nonequilibrium evolution. Proposition~\ref{thm:factorization_joint} and Theorem~\ref{cor:product_decay} both apply here: the NESS factorizes and both diagnostics share the asymptotic decay exponent. The robustness of these features under modest changes in system size is verified in Appendix~\ref{app:xxz-parametric-finite-size}.

Figure~\ref{fig:many_body_parametric} shows the entropy--negativity trajectories $\Gamma(t)=(S_{1/2}(t),E_N(t))$ for both protocols. In both cases the trajectory initially follows the pure-state diagonal but bends away at finite time when $E_N$ ceases to grow monotonically with $S_{1/2}$: this is the geometric onset of decoherence, $\tau_{\mathrm{d}}^{\mathrm{g}}$, in the sense of Definition~\ref{def:geom_dec}. The subsequent evolution differs sharply between the two protocols. Under $z$-dephasing the trajectory terminates at a point with $E_N=0$ but $S_{1/2}>0$, reflecting the separable but non-product NESS. Under gain/loss the trajectory approaches the origin, consistent with asymptotic factorization. The square marker in each panel identifies $\tau_{\mathrm{d}}^{\mathrm{g}}$ explicitly, confirming that the geometric criterion resolves the onset of decoherence in the interacting setting as well.

\section{Conclusion and Outlook}
\label{sec:conclusion}

We have introduced the geometric decoherence time $\tau_{\mathrm{d}}^{\mathrm{g}}$ as a dynamical scale for the onset of decoherence in open bipartite quantum systems, defined as the earliest moment the monotone relation between logarithmic negativity and R\'enyi-$\tfrac{1}{2}$ entropy (exactly equal across any bipartition for pure states~\cite{VidalWerner2002Negativity,Calabrese2012NegativityQFT,Calabrese2004EntanglementQFT}) fails. This definition signals the onset of entropy growth without accompanying entanglement growth, applies to Gaussian and interacting Lindbladian dynamics, and requires no spectral input or late-time fitting. A natural comparison scale is $\tau_{\mathrm{d}}^{\mathrm{peak}}$, the time at which the negativity attains its global maximum~\cite{Coser2014NegativityQuench,Sherman2016NegativityFiniteT,AlbaCarollo2022OpenNegativity}. Our results show that this prescription can substantially overshoot the geometric onset at weak dissipation within the topological phase of the Kitaev chain, confirming that these definitions are qualitatively distinct. 

We have also shown that the quantum mutual information $I(A{:}B) \to 0$ iff the NESS factorizes across the bipartition (Proposition~\ref{thm:factorization_joint}), and that exponential trace norm convergence to such a fixed point forces $I(A{:}B)$ and $E_N$ to share the same asymptotic decay rate (Theorem~\ref{cor:product_decay}). Beyond the asymptotic regime, we have numerically found that the two quantities track each other throughout the Lindbladian dynamics, except when a strong symmetry prevents factorization, in which case $E_N\to 0$ while $I(A{:}B)$ saturates to a finite plateau.

These results have been demonstrated via the Kitaev chain with balanced gain and loss, where a closed-form solution for the correlation matrix permits us to map $\tau_\mathrm{d}^\mathrm{g}$ across various chemical potentials and dissipation strengths. A key finding is that the topological phase sustains longer decoherence times than the trivial phase at every dissipation strength studied. This contrast can be traced to the existence of Majorana edge modes, which couple only weakly to the local jump operators. Within the topological phase, the chiral-symmetric point displays a local minimum for $\tau_{\mathrm{d}}^\mathrm{g}$. We have also verified that our results continue to hold for the interacting XXZ chain (for numerically accessible system-sizes). When a strong U(1) symmetry is preserved (local dephasing), the NESS is separable but not a product-state, with a finite $I(A{:}B)$ plateau; adding gain and loss breaks the strong U(1) symmetry to a weak U(1) symmetry, which restores asymptotic factorization.

A clear future direction is verifying the geometric criterion in higher-dimensional lattice models, in measurement-induced entanglement transitions~\cite{Skinner2019MIPT,GullansHuse2019Purification,Jian2021MIPTSYK}, and in random-state ensembles where negativity exhibits Page-curve-like behavior~\cite{Shapourian2021NegativitySpectrum}. Exploring these systems would test the model-independence claimed here and could expose new universality classes for the geometric onset of decoherence. In dissipative many-body engineering, $\tau_{\mathrm{d}}^{\mathrm{g}}$ can serve as a concrete design target for tailoring Lindblad generators in driven-dissipative spin and superconducting-qubit architectures~\cite{Aydogan2025StabilizingNESS,Yamanaka2023OpenXX,Zhang2024KitaevSuperconductingQubits,Shustin2026DissipativeTS}, with quadratic models admitting analytical control via correlation matrices and interacting systems addressed via large-scale tensor-network methods such as TDVP and MPO-based Lindblad evolution, which would also enable a systematic finite-size scaling analysis of $\tau_{\mathrm{d}}^{\mathrm{g}}$ with $L$. On the experimental side, the two key observables---logarithmic negativity (or its moments) and R\'enyi entropy or mutual information---are now accessible across several platforms: randomized-measurement and classical-shadow protocols~\cite{Elben2020MixedState,Zhou2020Negativity}, continuous-variable optical experiments~\cite{Jeng2019MBEDistillation}, and quantum-gas microscopes with single-site-resolved tomography~\cite{Bakr2009QuantumGasMicroscope,Gluza2020QuantumReadout}. Realistic routes for measuring the entropy–negativity trajectory now exist: engineered Lindbladian dynamics in quadratic and spin-chain models~\cite{Yamanaka2023OpenXX,Aydogan2025StabilizingNESS,Zhang2024KitaevSuperconductingQubits,Shustin2026DissipativeTS} and effective descriptions of inelastic loss in ultracold-atom settings~\cite{Braaten2017LindbladLoss} provide accessible platforms for testing whether the earliest loss of monotonicity furnishes an operational decoherence scale.

\section*{Acknowledgments}
R.J. acknowledges partial support, and S.H. acknowledges support, from the U.S. Department of Energy, Office of Science, Office of Advanced Scientific Computing Research, through the Exploratory Research for Extreme Scale Science (EXPRESS) program under Award No. DE-SC0026337. A.P. acknowledges support from the same program under Award No. DE-SC0026216.
	
\textit{Disclaimer.---} This report was prepared as an account of work sponsored by an agency of the United States Government. Neither the United States Government nor any agency thereof, nor any of their employees, makes any warranty, express or implied, or assumes any legal liability or 	responsibility for the accuracy, completeness, or usefulness of any information, apparatus, product, or process disclosed, or represents that its use would not infringe privately owned rights. Reference herein to any specific commercial product, process, or service by trade name, trademark, manufacturer, or otherwise does not necessarily constitute or imply its endorsement, recommendation, or favoring by the United States Government or any agency thereof. The views and opinions of authors expressed herein do not necessarily state or reflect those of the United States Government or any agency thereof.
	
\textit{Data availability.---} 	The data that support the findings of this article are available upon reasonable request from the authors.

	\bibliography{refs_gdt}
	
	\appendix
	
\clearpage

\section{The Pure-State Identity $E_N = S_{1/2}$}
\label{app:pure_state_proof}

The identity Eq.~\eqref{eq:pure_identity} is well known and follows from the Schmidt decomposition; it is implicit in Ref.~\cite{VidalWerner2002Negativity} and stated explicitly in the quantum field theory context in Refs.~\cite{Calabrese2012NegativityQFT,Calabrese2004EntanglementQFT}. We present a self-contained proof here for completeness.

\begin{proof}
Let $|\psi\rangle_{AB}=\sum_i\sqrt{p_i}\,|i_A\rangle\otimes|i_B\rangle$ be the Schmidt decomposition, with $p_i\ge 0$ and $\sum_i p_i=1$.
The reduced state is $\rho_A=\sum_i p_i|i_A\rangle\langle i_A|$, giving
	\begin{equation}
		S_{1/2}(\rho_A) = 2\log\!\left(\sum_i\sqrt{p_i}\right).
	\end{equation}
The partial transpose with respect to $B$ is
	\begin{equation}
		\rho_{AB}^{T_B}
		= \sum_{i,j}\sqrt{p_ip_j}\,
		|i_A\rangle\langle j_A|\otimes|j_B\rangle\langle i_B|.
	\end{equation}
The diagonal blocks $i=j$ yield eigenvectors $|i_A\rangle\otimes|i_B\rangle$ with eigenvalue $p_i$. For each off-diagonal pair $i\neq j$, the subspace $\mathrm{span}\{|i_A\rangle\otimes|j_B\rangle,\,|j_A\rangle\otimes|i_B\rangle\}$ is invariant with matrix representation $\bigl(\begin{smallmatrix}0&\sqrt{p_ip_j}\\\sqrt{p_ip_j}&0\end{smallmatrix}\bigr)$, whose eigenvalues are $\pm\sqrt{p_ip_j}$. The full spectrum of $\rho_{AB}^{T_B}$ is therefore $\{p_i\}_i\cup\{\pm\sqrt{p_ip_j}\}_{i<j}$, so
	\begin{equation}
		\left\|\rho_{AB}^{T_B}\right\|_1
		= \sum_ip_i+2\sum_{i<j}\sqrt{p_ip_j}
		= \left(\sum_i\sqrt{p_i}\right)^{\!2}.
	\end{equation}
Hence $E_N(\rho_{AB})=\log\|\rho_{AB}^{T_B}\|_1=S_{1/2}(\rho_A)$. The equality $S_{1/2}(\rho_A)=S_{1/2}(\rho_B)$ follows because $\rho_A$ and $\rho_B$ share the same nonzero spectrum $\{p_i\}$.
\end{proof}

The result holds for any bipartition and any pure state. In particular it holds throughout any unitary evolution from a pure initial state, irrespective of system size. In the fermionic Gaussian setting, the same identity holds with $E_N$ replaced by $\mathcal{E}_F$ when the state is Gaussian and pure, since Eq.~\eqref{eq:fermionic_negativity_spectrum} reduces to the Schmidt-spectrum expression in that limit~\cite{Shapourian2017FermionicNegativity,AlbaCarollo2022OpenNegativity}.

\section{Proof of Proposition~\ref{thm:factorization_joint}}
\label{app:factorization_proof}

We prove that, for a Lindblad evolution admitting a trace-norm limit $\rho(t)\to\rho_\infty$, the steady-state mutual information vanishes if and only if the steady state factorizes across the bipartition.

\begin{proof}
Recall from Eq.~\eqref{eq:MI_relative_entropy} that
	\begin{equation}
		I(A{:}B)_\sigma
		= D\!\left(\sigma\,\middle\|\,\sigma_A\otimes\sigma_B\right)
		\ge 0,
	\end{equation}
with equality if and only if $\sigma = \sigma_A\otimes\sigma_B$ (Klein's inequality~\cite{NielsenChuangBook}).
	
	\smallskip
	\noindent\textit{(i) implies (ii).}
If $\rho_\infty = \rho_{\infty,A}\otimes\rho_{\infty,B}$, then
	\begin{equation}
		I(A{:}B)_{\rho_\infty}
		= D\!\left(\rho_{\infty,A}\otimes\rho_{\infty,B}
		\,\middle\|\,\rho_{\infty,A}\otimes\rho_{\infty,B}\right)
		= 0.
	\end{equation}
Since $\rho(t)\to\rho_\infty$ in trace norm and mutual information is continuous in trace norm on any finite-dimensional Hilbert space~\cite{NielsenChuangBook}, it follows that $I(A{:}B;t)\to I(A{:}B)_{\rho_\infty}=0$.
	
	\smallskip
	\noindent\textit{(ii) implies (i).}
Suppose $I(A{:}B;t)\to 0$ and $\rho(t)\to\rho_\infty$ in trace norm. By continuity of $I$ in finite dimensions, $I(A{:}B;t)\to I(A{:}B)_{\rho_\infty}$. Since the left-hand side tends to zero, $I(A{:}B)_{\rho_\infty}=0$, which by Eq.~\eqref{eq:MI_iff_product} implies $\rho_\infty = \rho_{\infty,A}\otimes\rho_{\infty,B}$.
	
	\smallskip
	\noindent\textit{Vanishing of $E_N$.}
If $\rho_\infty$ is a product state, its partial transpose is
	\begin{equation}
		\left(\rho_{\infty,A}\otimes\rho_{\infty,B}\right)^{T_B}
		= \rho_{\infty,A}\otimes\rho_{\infty,B}^{\,T} \ge 0,
	\end{equation}
and for any positive semidefinite operator the trace norm equals the trace, so $\|\rho_\infty^{T_B}\|_1 = 1$ and $E_N(\rho_\infty)=0$. Since partial transposition is a bounded linear map on trace-class operators in finite dimensions, and $\rho(t)\to\rho_\infty$ in trace norm, one obtains $\|\rho(t)^{T_B}\|_1\to\|\rho_\infty^{T_B}\|_1=1$, hence $E_N(t)\to 0$.
\end{proof}

\section{Sufficient Condition: Unital Primitive Dynamics}
\label{app:unital_primitive}

Let $\{\Phi_t\}_{t\ge 0}$ denote the quantum dynamical semigroup generated by the GKLS equation~\eqref{eq:lindblad_general}, with $\rho(t)=\Phi_t[\rho(0)]$, acting on the Hilbert space $\mathcal{H}=\mathcal{H}_A\otimes\mathcal{H}_B$ with $d=d_Ad_B=\dim\mathcal{H}$. Recall that the semigroup is called \emph{unital} if $\Phi_t(\mathds{1})=\mathds{1}$ for all $t\ge 0$ (equivalently, the maximally mixed state is a fixed point) and \emph{primitive} if it has a unique full-rank stationary state $\rho_\infty$ such that $\Phi_t[\rho]\to\rho_\infty$ in trace norm for every initial state $\rho$~\cite{Wolf2012QChannels}.

\begin{remark}[Unital primitive dynamics as a sufficient condition]
If $\{\Phi_t\}$ is both unital and primitive, then its unique stationary state is the maximally mixed state $\rho_\infty=\mathds{1}/d$. Since $\mathds{1}=\mathds{1}_A\otimes\mathds{1}_B$, this factorizes as
	\begin{equation}
		\rho_\infty
		= \frac{\mathds{1}_A}{d_A}\otimes\frac{\mathds{1}_B}{d_B},
	\end{equation}
satisfying condition~(i) of Proposition~\ref{thm:factorization_joint}. Consequently,
	\begin{equation}
		I(A{:}B;t)\xrightarrow{t\to\infty}0,
		\qquad
		E_N(t)\xrightarrow{t\to\infty}0.
	\end{equation}
This provides the simplest concrete realization of the theorem: infinite temperature is sufficient but not necessary for factorization. The balanced gain/loss Kitaev chain studied in Sec.~\ref{sec:single_body} is an example: the balanced NESS is $C_\infty=\frac{1}{2}\mathds{1}_{2L}$, corresponding to $\rho_\infty\propto\mathds{1}$, so both $I$ and $\mathcal{E}_F$ vanish asymptotically regardless of initial state. Decoherence in the sense of Definition~\ref{def:geom_dec} occurs at finite time during the approach to this fixed point.
\end{remark}

\section{Quantitative late-time bounds near a product fixed point}
\label{app:late_time_bounds}

Here, we prove Theorem~\ref{cor:product_decay}. We work with a finite dimensional local Hilbert space and assume that the Lindblad evolution satisfies the trace-norm mixing bound
\begin{equation}
	\|\rho(t)-\rho_\infty\|_1 \le C e^{-\lambda t}
	\label{eq:product_mixing_bound}
\end{equation}
for constants $C,\lambda>0$ and all sufficiently large $t$, and that $\rho_\infty=\rho_{\infty,A}\otimes\rho_{\infty,B}$ is a product state. We define
\begin{equation}
	\begin{aligned}
		\rho_{\infty,A} &:= \operatorname{Tr}_B \rho_\infty, \\
		\rho_{\infty,B} &:= \operatorname{Tr}_A \rho_\infty, \\
		I(A{:}B;t) &:= I(A{:}B)_{\rho(t)},
	\end{aligned}
\end{equation}
and let
\begin{equation}
	\varepsilon(t) := \|\rho(t)-\rho_\infty\|_1,
	\qquad
	\eta(t) := \frac{\varepsilon(t)}{2}.
\end{equation}
If $d_A=1$ or $d_B=1$, both $I(A{:}B;t)$ and $E_N(t)$ vanish identically and there is nothing to prove. We therefore assume $d_A,d_B\ge 2$.

\paragraph*{Bound on $I(A{:}B;t)$.}
Because partial trace is CPTP, the trace norm is contractive, so
\begin{equation}
	\|\rho_A(t)-\rho_{\infty,A}\|_1 \le \varepsilon(t),
	\qquad
	\|\rho_B(t)-\rho_{\infty,B}\|_1 \le \varepsilon(t).
	\label{eq:appendix_reduced_trace_bound}
\end{equation}
Since $I(A{:}B)_\rho = S(\rho_A)+S(\rho_B)-S(\rho_{AB})$ and $I(A{:}B)_{\rho_\infty}=0$, the triangle inequality gives
\begin{equation}
	\begin{aligned}
		I(A{:}B;t) \le{} & |S(\rho_A(t))-S(\rho_{\infty,A})| \\
		&+ |S(\rho_B(t))-S(\rho_{\infty,B})| \\
		&+ |S(\rho(t))-S(\rho_\infty)|.
	\end{aligned}
	\label{eq:appendix_MI_triangle}
\end{equation}
The Audenaert--Fannes continuity bound~\cite{Fannes1973Dec,Audenaert2007Jun} states that for states on a finite-dimensional Hilbert space $\mathcal{K}$ with $\frac{1}{2}\|\sigma-\tau\|_1\le\eta\le\tfrac{1}{2}$,
\begin{equation}
	|S(\sigma)-S(\tau)|
	\le \eta\log(\dim\mathcal{K}-1)+h_2(\eta),
	\label{eq:appendix_Audenaert_Fannes}
\end{equation}
where $h_2(x):=-x\log x-(1-x)\log(1-x)$. Since $\eta(t)\to 0^+$ under Eq.~\eqref{eq:product_mixing_bound}, the condition $\eta(t)\le\tfrac{1}{2}$ holds for all sufficiently large $t$. Applying Eq.~\eqref{eq:appendix_Audenaert_Fannes} to the three terms in Eq.~\eqref{eq:appendix_MI_triangle} with $d=d_Ad_B$ and using $d_A,d_B\le d$ yields
\begin{equation}
	I(A{:}B;t)
	\le \frac{3}{2}\,\varepsilon(t)\log d
	+ 3\,h_2\!\left(\eta(t)\right).
	\label{eq:appendix_MI_final_bound}
\end{equation}
Since $h_2(x) = x\log(1/x)+O(x)$ as $x\to 0$,
\begin{equation}
	h_2\!\left(\eta(t)\right)
	= O\!\left(\eta(t)\log\tfrac{1}{\eta(t)}\right)
	= O\!\left(t\,e^{-\lambda t}\right).
\end{equation}
Together with $\varepsilon(t)=O(e^{-\lambda t})$, Eq.~\eqref{eq:appendix_MI_final_bound} gives
\begin{equation}
	I(A{:}B;t) = O\!\left(t\,e^{-\lambda t}\right).
\end{equation}

\paragraph*{Bound on $E_N(t)$.}
Since $\rho_\infty=\rho_{\infty,A}\otimes\rho_{\infty,B}$,
\begin{equation}
	\rho_\infty^{T_B}
	= \rho_{\infty,A}\otimes\rho_{\infty,B}^{\,T} \ge 0,
	\qquad
	\|\rho_\infty^{T_B}\|_1 = 1.
\end{equation}
Partial transposition is a bounded linear map on trace-class operators in finite dimensions, so there exists a constant $c_{T_B}>0$, depending only on the bipartition dimensions, such that
\begin{equation}
	\|X^{T_B}\|_1 \le c_{T_B}\|X\|_1
	\qquad \text{for all }X.
	\label{eq:appendix_partial_transpose_bound}
\end{equation}
By linearity of partial transposition and the triangle inequality,
\begin{equation}
	\|\rho(t)^{T_B}\|_1
	\le 1 + \|(\rho(t)-\rho_\infty)^{T_B}\|_1
	\le 1 + c_{T_B}\,\varepsilon(t).
\end{equation}
Since $ \log(1+x) \leq x/\ln 2 $,
\begin{equation}
	E_N(t)
	= \log\|\rho(t)^{T_B}\|_1
	\leq \frac{c_{T_B}}{\ln 2}\,\varepsilon(t)
	= O\!\left(e^{-\lambda t}\right).
\end{equation}
This completes the proof of Theorem~\ref{cor:product_decay}.

\section{Quasiparticle Counting Argument for Mutual Information Tracking}
\label{app:quasiparticle}

We present the quasiparticle counting argument that provides physical intuition for why $I(A{:}B)$ and $\mathcal{N}$ peak and decay on the same timescales in the absence of a strong symmetry.

Consider three disjoint regions $A_1$, $A_2$, and their joint complement $B=(A_1\cup A_2)^c$. A quasiparticle pair created at some time $t_0$ and propagating ballistically can fall into one of four classes:
\begin{enumerate}[label=(\roman*)]
	\item one quasiparticle in $A_1$, one in $A_2$;
	\item one quasiparticle in $A_1$, one in $B$;
	\item one quasiparticle in $A_2$, one in $B$;
	\item both quasiparticles in the same region.
\end{enumerate}
In the standard quasiparticle picture~\cite{CalabreseCardy2005,AlbaCalabrese2019EPL}, $S(A_1)$ counts pairs split across the $A_1|A_1^c$ boundary, namely classes (i) and (ii); $S(A_2)$ counts classes (i) and (iii); and $S(A_1\cup A_2)$ counts classes (ii) and (iii). Consequently,
\begin{equation}
	I(A_1{:}A_2)
	= S(A_1)+S(A_2)-S(A_1\cup A_2)
	= 2\times\text{(class (i))},
	\label{eq:MI_quasiparticle}
\end{equation}
where the contributions from classes (ii) and (iii) cancel exactly. Equation~\eqref{eq:MI_quasiparticle} shows that mutual information retains precisely the contribution from pairs shared between $A_1$ and $A_2$, the same pairs responsible for the bipartite entanglement and hence for the negativity. This is why both quantities peak when the density of shared pairs is maximal and decay together as dissipation destroys those pairs.

In the present setup $A_1\cup A_2=A$ and $B=A^c$, so the argument applies with $I(A_1{:}A_2)$ replaced by $I(A{:}B)$ when the bipartition is between $A$ and its complement. The quasiparticle argument is not a rigorous bound between $I$ and $\mathcal{N}$ --- it applies strictly in free or weakly interacting systems and does not account for multipartite entanglement in genuinely interacting systems --- but it provides the correct physical picture for why the two quantities track each other throughout the nonequilibrium dynamics. When a strong symmetry is present, the counting breaks down because pairs can carry conserved charge that introduces classical correlations between $A$ and $B$ even after all quantum entanglement has been destroyed, consistent with Eq.~\eqref{eq:symmetry_plateau}.

\section{Correlation-Matrix Dynamics, Liouvillian Gap, and Perturbative Decoherence Time}
\label{app:derivation}

This Appendix has three parts. Section~\ref{app:derivation:eom} derives the equation of motion Eq.~\eqref{eq:correlation_matrix_ode} from the Lindblad master equation and establishes the exact balanced-case solution Eq.~\eqref{eq:balanced_closed_form}. Section~\ref{app:derivation:gap} derives the covariance-sector Liouvillian gap used in the analysis of Sec.~\ref{sec:single_body}.
Section~\ref{app:perturbative} proves the perturbative decoherence-time formula Eq.~\eqref{eq:tau_perturbative}.

\subsection{Equation of motion for $C(t)$ and the balanced solution}
\label{app:derivation:eom}

We consider a chain of $L$ sites with fermionic creation and annihilation operators satisfying $\{c_i,c_j^\dagger\}=\delta_{ij}$ and $\{c_i,c_j\}=0$. Collecting all $2L$ operators into the Nambu spinor $\Psi=(c_1,\ldots,c_L,c_1^\dagger,\ldots,c_L^\dagger)^T$, the Hamiltonian takes the form $H=\tfrac{1}{2}\Psi^\dagger\mathcal{H}_{\mathrm{BdG}}\Psi$ up to a constant, where $\mathcal{H}_{\mathrm{BdG}}$ is the $2L\times 2L$ Hermitian Bogoliubov--de Gennes matrix. The dynamical variable is the equal-time Nambu correlation matrix $C_{ab}(t):=\langle\Psi_a\Psi_b^\dagger\rangle_t$.

The Lindblad master equation Eq.~\eqref{eq:lindblad_general} is driven by the single-site jump operators $L_{j,+}=\sqrt{\gamma_+}\,c_j^\dagger$ and $L_{j,-}=\sqrt{\gamma_-}\,c_j$, with $\gamma_\pm$ rates of dimension $[\text{time}]^{-1}$. Because both the Hamiltonian and the jump operators are linear in the fermionic operators, an initially Gaussian state remains Gaussian throughout the evolution~\cite{Prosen2008ThirdQuantization,Prosen2010ThirdQuantization}, and the entire dynamics is encoded in $C(t)$. Taking $d\langle\Psi_a\Psi_b^\dagger\rangle/dt$ via the master equation and evaluating the resulting commutators and anticommutators using the canonical anticommutation relations gives, after standard algebra,
\begin{equation}
	\frac{dC}{dt}
	=
	-i[\mathcal{H}_{\mathrm{BdG}},C]
	-(\gamma_++\gamma_-)C
	+
	\begin{pmatrix}
		\gamma_-\,\mathds{1}_L & 0 \\
		0 & \gamma_+\,\mathds{1}_L
	\end{pmatrix},
	\label{eq:cmde_app}
\end{equation}
which is Eq.~\eqref{eq:correlation_matrix_ode} of the main text. The commutator term rotates $C$ unitarily under the BdG Hamiltonian; the second term damps all two-point correlators uniformly at rate $\gamma_++\gamma_-$; the source matrix injects particles (at rate $\gamma_-$) and holes (at rate $\gamma_+$) to drive the system toward a mixed steady state. Setting $dC/dt=0$ and using $[\mathcal{H}_{\mathrm{BdG}},C_\infty]=0$ for a diagonal $C_\infty$ gives
\begin{equation}
	C_\infty
	=
	\begin{pmatrix}
		\dfrac{\gamma_-}{\gamma_++\gamma_-}\,\mathds{1}_L & 0 \\[6pt]
		0 & \dfrac{\gamma_+}{\gamma_++\gamma_-}\,\mathds{1}_L
	\end{pmatrix},
	\label{eq:gaussian_steady_app}
\end{equation}
valid whenever $\gamma_++\gamma_->0$.

In the balanced case $\gamma_+=\gamma_-\equiv\gamma$, Eq.~\eqref{eq:cmde_app} simplifies to
\begin{equation}
	\frac{dC}{dt}
	= -i[\mathcal{H}_{\mathrm{BdG}},C] - 2\gamma C
	+ \gamma\,\mathds{1}_{2L}.
\end{equation}
Introducing the traceless deviation $X(t):=C(t)-\tfrac{1}{2}\mathds{1}_{2L}$ and using $[\mathcal{H}_{\mathrm{BdG}},\mathds{1}_{2L}]=0$ reduces this to the homogeneous equation
\begin{equation}
	\frac{dX}{dt}
	= -i[\mathcal{H}_{\mathrm{BdG}},X] - 2\gamma X.
	\label{eq:X_ode}
\end{equation}
Diagonalizing $\mathcal{H}_{\mathrm{BdG}}$ with eigenvalues $E_1,\ldots,E_{2L}$, the matrix elements of $X$ decouple:
\begin{equation}
	\frac{d(X)_{mn}}{dt}
	= \bigl[-i(E_m-E_n)-2\gamma\bigr](X)_{mn},
\end{equation}
with solution $(X(t))_{mn}=e^{[-i(E_m-E_n)-2\gamma]t}(X(0))_{mn}$, or in matrix form $X(t)=e^{-2\gamma t}e^{-i\mathcal{H}_{\mathrm{BdG}}t} X(0)\,e^{+i\mathcal{H}_{\mathrm{BdG}}t}$. Returning to $C=X+\tfrac{1}{2}\mathds{1}_{2L}$ yields the exact balanced solution
\begin{equation}
	C(t)
	=
	e^{-2\gamma t}\,
	e^{-i\mathcal{H}_{\mathrm{BdG}}t}\,C(0)\,
	e^{+i\mathcal{H}_{\mathrm{BdG}}t}
	+
	\bigl(1-e^{-2\gamma t}\bigr)\frac{\mathds{1}_{2L}}{2},
\end{equation}
which is Eq.~\eqref{eq:balanced_closed_form}. As $t\to\infty$, $C(t)\to\tfrac{1}{2}\mathds{1}_{2L}$, confirming that the maximally mixed Gaussian state is the unique stationary state for any initial condition when $\gamma>0$.

\subsection{Covariance-sector Liouvillian gap}
\label{app:derivation:gap}

The homogeneous equation~\eqref{eq:X_ode} is governed by the linear super-operator
\begin{equation}
	\mathcal{L}_{\mathrm{cov}}(X)
	:= -i[\mathcal{H}_{\mathrm{BdG}},\,X] - 2\gamma X,
	\label{eq:Lcov_def}
\end{equation}
where $X=C(t)-\tfrac{1}{2}\mathds{1}_{2L}$ is the traceless deviation from the infinite-temperature fixed point. Its spectrum is readily obtained. Since $\mathcal{H}_{\mathrm{BdG}}$ is Hermitian, it admits an orthonormal eigenbasis $\{\phi_n\}$ with real eigenvalues $E_n$. The rank-one matrices $|\phi_m\rangle\langle\phi_n|$ form a basis for all $2L\times 2L$ matrices, and a straightforward calculation gives
\begin{equation}
	\begin{aligned}
	\mathcal{L}_{\mathrm{cov}}\bigl(|\phi_m\rangle\langle\phi_n|\bigr)
	=&
	\lambda_{mn}\,|\phi_m\rangle\langle\phi_n|, \\
	\lambda_{mn} =& -2\gamma - i(E_m-E_n).
	\end{aligned}
	\label{eq:Lcov_spectrum}
\end{equation}
These $(2L)^2$ eigenvalues exhaust the spectrum of $\mathcal{L}_{\mathrm{cov}}$. Because $E_m-E_n$ is real, the real part of every eigenvalue equals $-2\gamma$ exactly, so the covariance-sector Liouvillian gap satisfies
\begin{equation}
	\Delta_{\mathrm{cov}} = 2\gamma,
	\label{eq:gap_result}
\end{equation}
independent of $\mu$, $J$, and $\Delta$, and identical in the topological phase $|\mu|<2J$, the trivial phase $|\mu|>2J$, and at the critical point $|\mu|=2J$.

The full affine dynamics of $C(t)$ carries an additional eigenvalue $0$, corresponding to the stationary state $C_\infty=\tfrac{1}{2}\mathds{1}_{2L}$; the complete spectrum is therefore $\{0\}\cup\{\lambda_{mn}\}_{m\neq n}$. The gap Eq.~\eqref{eq:gap_result} governs only the non-zero part, i.e., how fast deviations from $C_\infty$ decay. Since $\Delta_{\mathrm{cov}}$ is $\mu$-independent, any $\mu$-dependence of $\tau_{\mathrm{d}}^{\mathrm{g}}$ must originate entirely from the \emph{unitary} part of the dynamics encoded in $e^{-i\mathcal{H}_{\mathrm{BdG}}t}$, not from spectral differences in the dissipator.

\subsection{Perturbative decoherence time at small $\gamma$}
\label{app:perturbative}

We derive the leading-order shift of $\tau_{\mathrm{d}}^{\mathrm{g}}$ away from the unitary maximum $t_*$ when $\gamma$ is small. Throughout, $t_*$ is taken to be the first local maximum of $s(t)=\mathcal{E}_F(t;\gamma=0)$ on $(0,\infty)$; if the unitary trajectory has multiple local maxima, the formula applies at each in turn, and the geometric decoherence time is the shift of the earliest such maximum. The argument rests on four assumptions: (i)~at $\gamma=0$, the pure-state identity gives $\mathcal{E}_F(t;0)=S_{1/2}(t;0)\equiv s(t)$; (ii)~$s(t)$ has a first non-degenerate local maximum at $t_*$, i.e., $s'(t_*)=0$ and $s''(t_*)<0$; (iii)~$S_{1/2}$ and $\mathcal{E}_F$ are differentiable in $\gamma$ near $\gamma=0$ and $t=t_*$; (iv)~the first crossing of the geometric decoherence condition is transverse. Define the first-order response coefficients
\begin{equation}
	s_1(t)
	:= \left.\frac{d}{d\gamma}S_{1/2}(t;\gamma)\right|_{\gamma=0},
	\qquad
	n_1(t)
	:= \left.\frac{d}{d\gamma}\mathcal{E}_F(t;\gamma)\right|_{\gamma=0}.
\end{equation}

\begin{proposition}
	\label{prop:perturbative}
Under assumptions~(i)--(iv) and provided
	$s_1'(t_*)\ge n_1'(t_*)$,
	\begin{equation}
		\tau_{\mathrm{d}}^{\mathrm{g}}
		=
		t_*
		- \gamma\,\frac{n_1'(t_*)}{s''(t_*)}
		+ O(\gamma^2).
		\label{eq:tau_perturbative_app}
	\end{equation}
\end{proposition}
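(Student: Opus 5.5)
The approach is an implicit-function argument near the unitary maximum $t_*$. Write $\mathcal{N}(t;\gamma)=\mathcal{E}_F(t;\gamma)$ and $S(t;\gamma)=S_{1/2}(t;\gamma)$. By assumption~(i) both equal $s(t)$ at $\gamma=0$. I will strengthen assumption~(iii) slightly to joint $C^2$ regularity in $(t,\gamma)$ on a neighbourhood of $[0,t_*+\delta]\times\{0\}$. This holds in the balanced Gaussian case because of the closed form Eq.~\eqref{eq:balanced_closed_form}, provided the spectra entering Eq.~\eqref{eq:fermionic_negativity_spectrum} stay away from the non-smooth points of the square roots. Taylor expansion in $\gamma$ then gives, uniformly in $t$ on that interval,
\begin{equation}
\partial_t\mathcal{N}(t;\gamma)=s'(t)+\gamma\,n_1'(t)+O(\gamma^2),\qquad
\partial_t S(t;\gamma)=s'(t)+\gamma\,s_1'(t)+O(\gamma^2).
\end{equation}
Under the differentiable, single-valued form Eq.~\eqref{eq:geom_break_local} of Definition~\ref{def:geom_dec}, $\tau_{\mathrm{d}}^{\mathrm{g}}$ is the first time at which $\partial_t\mathcal{N}<0$ while $\partial_t S\ge 0$.

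\emph{Step 1: the zero of $\partial_t\mathcal{N}$.} Let $F(t,\gamma):=\partial_t\mathcal{N}(t;\gamma)$. Then $F(t_*,0)=s'(t_*)=0$ and $\partial_tF(t_*,0)=s''(t_*)<0$ by assumption~(ii). The implicit function theorem gives a $C^1$ branch $t_c(\gamma)$ with $F(t_c(\gamma),\gamma)=0$ and
\begin{equation}
t_c'(0)=-\frac{\partial_\gamma F}{\partial_t F}\Big|_{(t_*,0)}=-\frac{n_1'(t_*)}{s''(t_*)}.
\end{equation}
Hence $t_c(\gamma)=t_*-\gamma\,n_1'(t_*)/s''(t_*)+O(\gamma^2)$. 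Because $\partial_tF<0$ on this neighbourhood, $F$ is positive just before $t_c$ and negative just after it. This is the transversality required in assumption~(iv).

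\emph{Step 2: entropy is still non-decreasing at $t_c$.} At $t=t_c(\gamma)$ I will subtract the two expansions. Since $\partial_t\mathcal{N}$ vanishes there,
\begin{equation}
\partial_t S(t_c;\gamma)=\gamma\bigl(s_1'(t_*)-n_1'(t_*)\bigr)+O(\gamma^2).
\end{equation}
If the hypothesis $s_1'(t_*)\ge n_1'(t_*)$ holds strictly, this is positive for small $\gamma$. It stays positive on a window $(t_c,t_c+c\gamma)$, so the breaking condition is met immediately after $t_c$ and $\tau_{\mathrm{d}}^{\mathrm{g}}\le t_c(\gamma)$. The equality case $s_1'(t_*)=n_1'(t_*)$ leaves the sign of $\partial_t S$ to be fixed at $O(\gamma^2)$. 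There I would invoke assumption~(iv) to assert that the crossing still occurs at $t_c+O(\gamma^2)$, which does not affect the stated formula.

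\emph{Step 3: no earlier breaking.} I need $\partial_t\mathcal{N}\ge 0$ on $(0,t_c)$ for small $\gamma$. On compact intervals $[\delta,t_*-\delta]$ this follows because $s'>0$ there ($t_*$ is the \emph{first} local maximum) and the correction is $O(\gamma)$. Near $t_*$ it follows from the monotonicity of $F$ in $t$ established in Step 1.

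The main obstacle is the initial layer $t\to 0$. There $s(0)=0$ for the product N\'eel state, and $s'(t)$ may itself vanish as $t\to0$, so an $O(\gamma)$ perturbation could in principle produce a spurious early negative slope. To handle it, I would first show that $n_1(t)=O(t\,s(t))$ near $t=0$. The dissipative correction enters only through the envelope $e^{-2\gamma t}$ acting on correlations that are themselves generated unitarily. Then $\partial_t\mathcal{N}=s'(t)\bigl(1+O(\gamma)\bigr)$ near $0$, which rules out early breaking. If that estimate fails, I would state the result with $\tau_{\mathrm{d}}^{\mathrm{g}}$ restricted to the branch near $t_*$, matching the paper's remark that the expansion applies at the first local maximum. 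Combining Steps 1--3 gives $\tau_{\mathrm{d}}^{\mathrm{g}}=t_c(\gamma)$ up to $O(\gamma^2)$, which is Eq.~\eqref{eq:tau_perturbative_app}.
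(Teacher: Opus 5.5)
Your proposal is correct and is essentially the paper's argument. The paper likewise expands $\dot{\mathcal{E}}_F(t_*+\delta;\gamma)=s''(t_*)\delta+\gamma\,n_1'(t_*)+\dots$, solves for $\delta=-\gamma\,n_1'(t_*)/s''(t_*)$, and checks that $\dot S_{1/2}=\gamma[s_1'(t_*)-n_1'(t_*)]+O(\gamma^2)\ge 0$ at that point, so your implicit-function framing simply makes this step rigorous. Your Step~3, which rules out an earlier breaking, goes beyond the paper, which takes this for granted through the choice of $t_*$ as the first local maximum. One intermediate bound there is too optimistic: $n_1=O(t\,s)$ does not hold, since the envelope multiplies all of $C_0(t)-\tfrac12$, including the $O(1)$ occupations, which gives only $n_1=O(t)$. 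Your conclusion still survives, because $S_{1/2}$ grows linearly out of the N\'eel product state, so $s'(0^+)>0$.
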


\begin{proof}
	From the exact balanced solution $C_\gamma(t)=\tfrac{1}{2}\mathds{1}_{2L} +e^{-2\gamma t}(C_0(t)-\tfrac{1}{2}\mathds{1}_{2L})$, where $C_0(t)=e^{-i\mathcal{H}_{\mathrm{BdG}}t}C(0) e^{+i\mathcal{H}_{\mathrm{BdG}}t}$, expanding $e^{-2\gamma t}=1-2\gamma t+O(\gamma^2)$ gives
	\begin{equation}
		C_\gamma(t)
		= C_0(t)
		- 2\gamma t\bigl(C_0(t)-\tfrac{1}{2}\mathds{1}_{2L}\bigr)
		+ O(\gamma^2).
	\end{equation}
Under assumption~(iii), both observables inherit smooth $\gamma$-expansions at fixed $t$:
	\begin{equation}
		\begin{aligned}
			S_{1/2}(t;\gamma) &= s(t) + \gamma\,s_1(t) + O(\gamma^2),\\
			\mathcal{E}_F(t;\gamma) &= s(t) + \gamma\,n_1(t) + O(\gamma^2),
		\end{aligned}
	\end{equation}
sharing the same leading term $s(t)$ by assumption~(i).
	
By Definition~\ref{def:geom_dec}, $\tau_{\mathrm{d}}^{\mathrm{g}} =\inf\{t>0:\dot{\mathcal{E}}_F(t)<0 \text{ while }\dot{S}_{1/2}(t)\ge 0\}$. Writing $t=t_*+\delta$ with $\delta=O(\gamma)$ and expanding $\dot{\mathcal{E}}_F$ to first order in $\delta$ and $\gamma$,
	\begin{equation}
		\dot{\mathcal{E}}_F(t_*+\delta;\gamma)
		=
		s''(t_*)\,\delta + \gamma\,n_1'(t_*)
		+ O(\delta^2,\gamma\delta,\gamma^2).
	\end{equation}
The first term arises because $\dot{s}(t_*)=0$, so the leading Taylor correction to $\dot{s}$ is $s''(t_*)\delta$; the second term is the time derivative of the $\gamma$-linear part of $\mathcal{E}_F$ evaluated at $t_*$. Setting $\dot{\mathcal{E}}_F=0$ and solving for $\delta$ using $s''(t_*)<0$,
	\begin{equation}
		\delta = -\gamma\,\frac{n_1'(t_*)}{s''(t_*)},
	\end{equation}
which is Eq.~\eqref{eq:tau_perturbative_app}. To confirm the consistency condition, expand $\dot{S}_{1/2}$ at the same point:
	\begin{equation}
		\dot{S}_{1/2}(t_*+\delta;\gamma)
		= s''(t_*)\,\delta + \gamma\,s_1'(t_*)
		+ O(\delta^2,\gamma\delta,\gamma^2).
	\end{equation}
Substituting $\delta=-\gamma n_1'(t_*)/s''(t_*)$,
	\begin{equation}
		\dot{S}_{1/2}
		= \gamma\bigl[s_1'(t_*)-n_1'(t_*)\bigr] + O(\gamma^2),
	\end{equation}
which is non-negative to leading order precisely when $s_1'(t_*)\ge n_1'(t_*)$, as assumed.
\end{proof}

Equation~\eqref{eq:tau_perturbative_app} admits a simple physical interpretation. The balanced dissipator multiplies the deviation $C_\gamma(t)-\tfrac{1}{2} \mathds{1}_{2L}$ by the strictly decreasing factor $e^{-2\gamma t}$, so  $n_1(t)<0$ throughout. Moreover, since this suppression envelope grows in magnitude monotonically with $t$, the rate of suppression at $t_*$ is directed toward more negative values, giving $n_1'(t_*)<0$. Combined with $s''(t_*)<0$ this gives $n_1'(t_*)/s''(t_*)>0$, so the shift $\delta=-\gamma n_1'(t_*)/s''(t_*)$ is negative: $\tau_{\mathrm{d}}^{\mathrm{g}}$ decreases from $t_*$ as $\gamma$ increases, with the magnitude of the slope $n_1'(t_*)/s''(t_*)$ encoding the $\mu$-dependent sensitivity. Both $s''(t_*)$ and $n_1'(t_*)$ are purely unitary quantities determined by the $\gamma=0$ trajectory $C_0(t)$, and $\gamma$ enters only as the overall prefactor of the shift. The maximum $t_*$ itself is set by the finite-chain quasiparticle traversal time, of order $L/v_{\max}$. Near the phase boundary $|\mu|\to 2J$, the bulk gap closes, and the perturbative expansion can become parametrically delicate if the curvature $|s''(t_*)|$ becomes small or if the first-order response $n_1'(t_*)$ grows. In that regime even a modest $\gamma$ can produce a large shift $\delta$, so the expansion may break down unless the ratio $n_1'(t_*)/s''(t_*)$ remains controlled. The consistency condition $s_1'(t_*)\ge n_1'(t_*)$ is verified numerically to hold throughout the balanced topological-phase scan of Sec.~\ref{sec:single_body}.

\section{Fixed-$\gamma$ cuts near the chiral region}
\label{app:gaussian_fixed_gamma_mu}
To assess the residual $\mu$ dependence at fixed dissipation, Fig.~\ref{fig:appendix_gaussian_fixed_gamma_mu} compares two balanced gain/loss cuts at $\gamma=0.15$. The topological reference point $(\mu,\gamma)=(0.50,0.15)$ and the chiral point $(\mu,\gamma)=(0.00,0.15)$ show very similar behavior: in both cases $\tau_{\mathrm{d}}^{\mathrm{peak}}$ lies slightly below $\tau_{\mathrm{d}}^{\mathrm{g}}$, with only a modest quantitative difference between the two cuts. This comparison shows that, at fixed moderate dissipation, varying $\mu$ does not produce a significant qualitative change, and the peak-based estimate remains a viable proxy for the geometric criterion, albeit with a slight quantitative undershoot. The stronger failure mode is instead the weak-dissipation overshoot inside the topological phase discussed in the main text; see Fig.~\ref{fig:gaussian_fixed_mu_gamma_compare}. In both panels, the geometric decoherence time $\tau_{\mathrm{d}}^{\mathrm{g}}$ is extracted in the background from Definition~\ref{def:geom_dec} by applying the geometric criterion to the full trajectory $\Gamma(t)=\bigl(S_{1/2}(t),\mathcal{N}(t)\bigr)$.
\begin{figure}[htbp]
	\centering
	\begin{subfigure}[t]{0.48\textwidth}
		\centering
		\includegraphics[width=\linewidth]{gaussian_fixed_mu_0p50_gamma_0p15.png}
		\caption{}
	\end{subfigure}\hfill
	\begin{subfigure}[t]{0.48\textwidth}
		\centering
		\includegraphics[width=\linewidth]{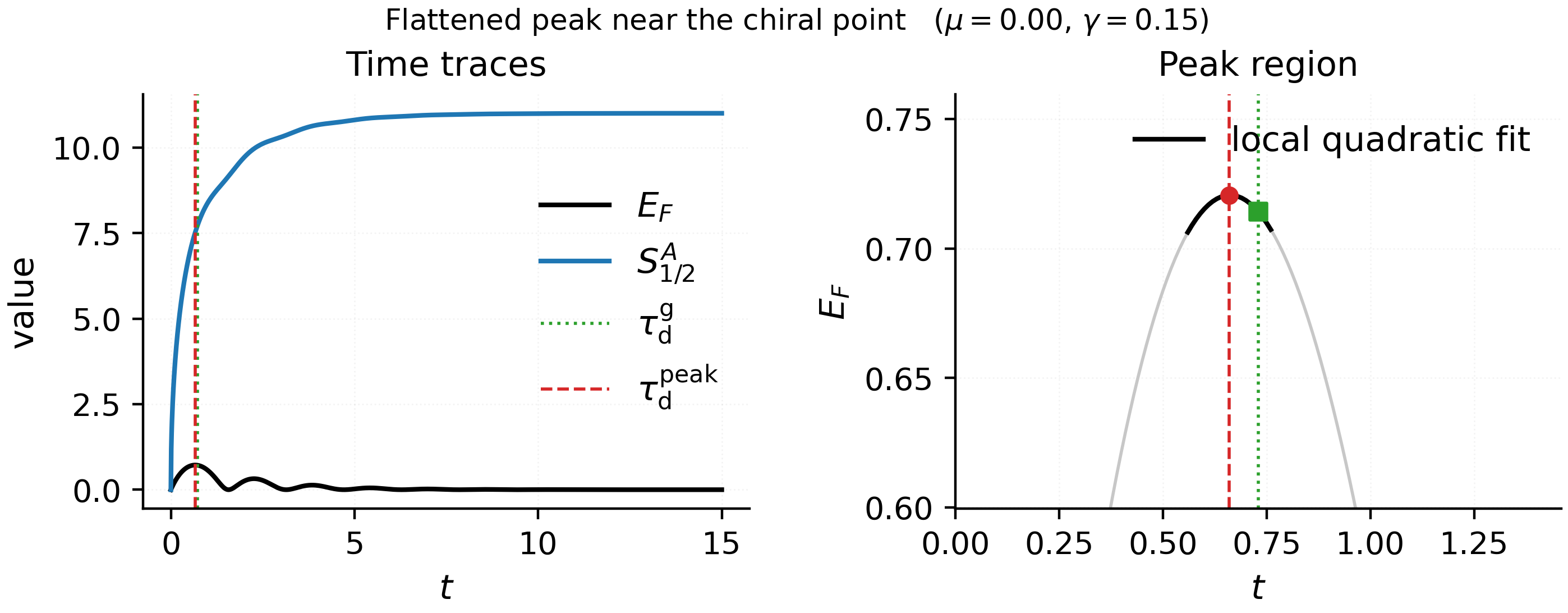}
		\caption{}
	\end{subfigure}
	\caption{Fixed-$\gamma$ comparison at $\gamma=0.15$. Left: topological reference point, $\mu=0.50$. Right: chiral point, $\mu=0.00$. Both panels show only a slight undershoot of $\tau_{\mathrm{d}}^{\mathrm{peak}}$ relative to $\tau_{\mathrm{d}}^{\mathrm{g}}$. Although the entropy--negativity trajectory is not shown, $\tau_{\mathrm{d}}^{\mathrm{g}}$ is extracted from Definition~\ref{def:geom_dec} using the full trajectory $\Gamma(t)=\bigl(S_{1/2}(t),\mathcal{N}(t)\bigr)$.}
	\label{fig:appendix_gaussian_fixed_gamma_mu}
\end{figure}

\section{Finite-Size and Subsystem Scaling of Gaussian Diagnostics}
\label{app:finite-size}

To assess the robustness of the extracted dynamical scales, we study both finite-size effects and dependence on the boundary-block size. Figure~\ref{fig:gaussian_scaling} shows that both $\tau_{\mathrm{d}}^{\mathrm{g}}$ and the peak fermionic Gaussian negativity remain flat across the ranges of $L$ and $L_A$ considered, indicating convergence for the chosen parameters.

\begin{figure}[H]
	\centering
	\includegraphics[width=\columnwidth]{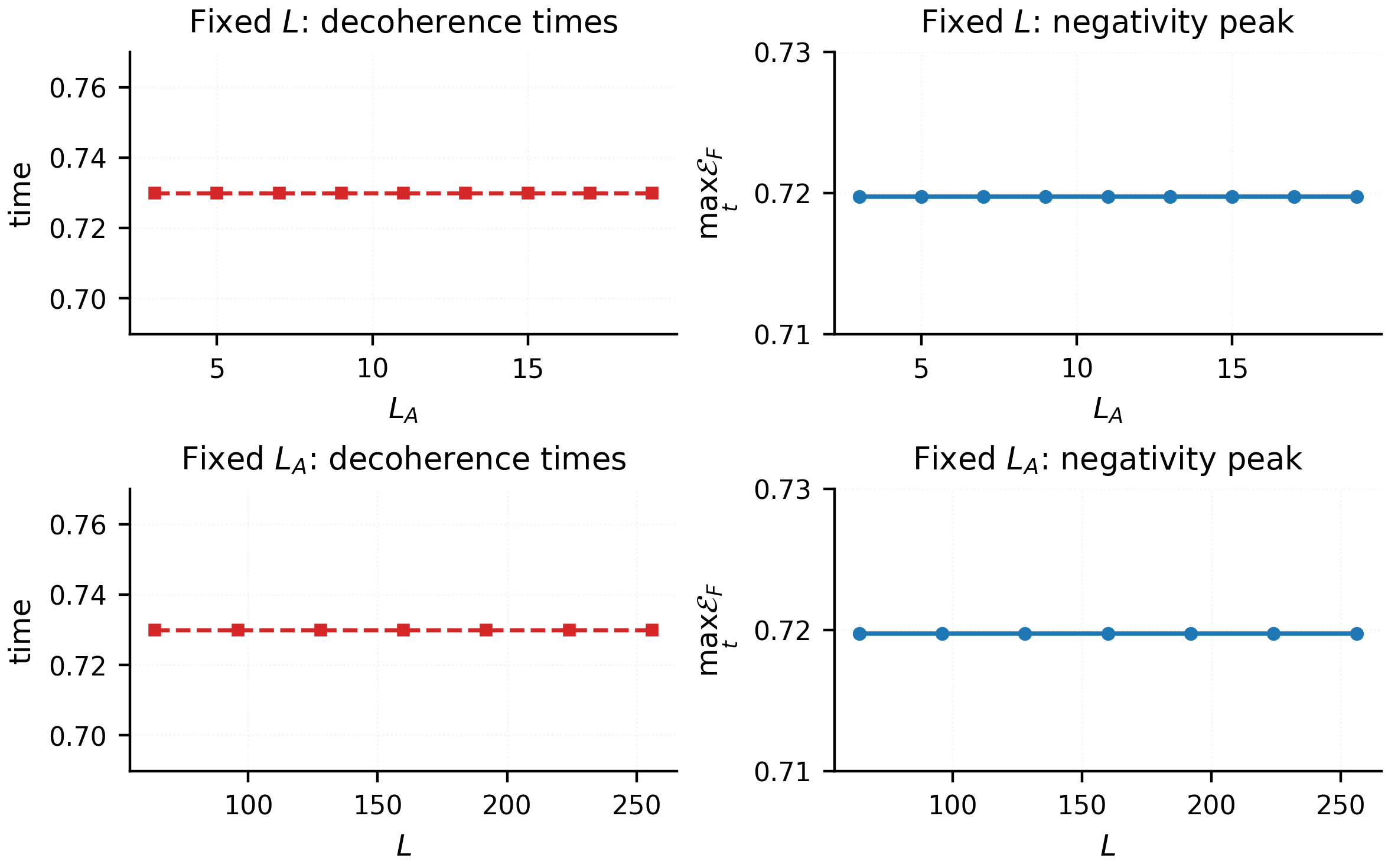}
	\caption{Finite-size and finite-cut dependence of the Gaussian nonequilibrium diagnostics. Left column: geometric decoherence time $\tau_{\mathrm{d}}^{\mathrm{g}}$. Right column: peak fermionic Gaussian negativity $\max_t\mathcal{E}_F$. Top row: dependence on boundary-block size $L_A$ at fixed $L=128$. Bottom row: dependence on total system size $L$ at fixed $L_A=16$. Both quantities remain flat across the ranges studied, confirming that the geometric decoherence scale and the peak entanglement have converged over the system sizes and bipartitions shown. Parameters: $J=\Delta=1$, $\mu=0.5$, $\gamma_+=\gamma_-=0.15$; top row: $L=128$, $L_A\in\{3,5,7,9,11,13,15,17,19\}$; bottom row: $L_A=16$, $L\in\{64,96,128,160,192,224,256\}$; all quantities extracted from $t\in[0,15]$ with step $0.01$.}
	\label{fig:gaussian_scaling}
\end{figure}

\section{Benchmarks}
\label{app:benchmarks}

This appendix collects three technical benchmarks supporting the Gaussian analysis of Sec.~\ref{sec:single_body}. The first compares the balanced-gain Gaussian correlation-matrix evolution against exact many-body Lindblad evolution at small system size, where both approaches are computationally feasible. The second establishes the distinction between the balanced case, where the closed-form solution Eq.~\eqref{eq:balanced_closed_form} applies, and the imbalanced case, where one must integrate Eq.~\eqref{eq:correlation_matrix_ode}. The third verifies that the mutual-information tracking timescales are converged with respect to both the output time step and the choice of bipartition.

\subsection{Balanced Gaussian dynamics versus exact many-body evolution}

For sufficiently small chains, the full Lindblad equation can be solved in the many-body Hilbert space, providing a direct comparison against the Gaussian correlation-matrix treatment. Figure~\ref{fig:appendix_exact_vs_gaussian} shows this comparison at the level of subsystem entropies: the time dependence of $S_{1/2}$, $S_2$, and $S_{\mathrm{vN}}$ obtained from the balanced closed-form Gaussian evolution is compared against those from exact many-body Lindblad evolution implemented in QuTiP~\cite{Johansson2012Aug}. The agreement over the full time window validates the Gaussian framework at small system size.

\begin{figure*}[htbp]
	\centering
	\includegraphics[width=\textwidth]{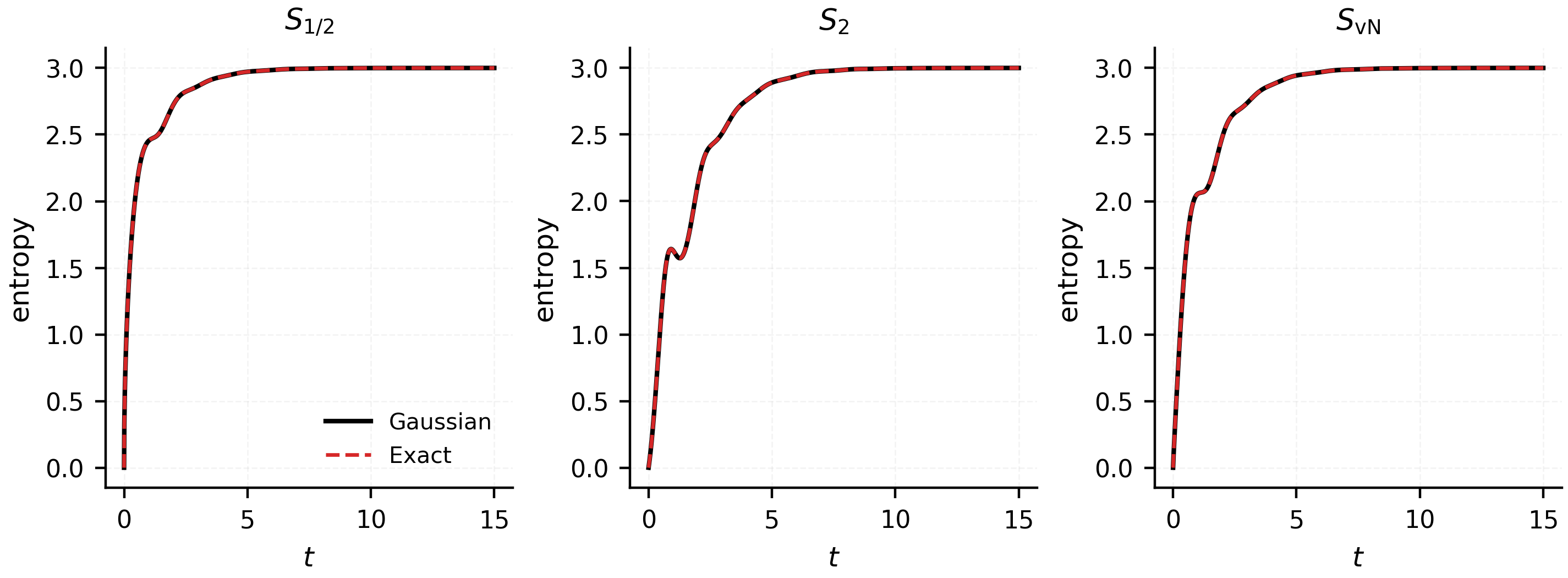}
	\caption{Small-system benchmark of the balanced Gaussian correlation-matrix solution against exact many-body Lindblad evolution. The three panels compare $S_{1/2}$, $S_2$, and $S_{\mathrm{vN}}$ from the two approaches, showing excellent agreement over the full time window. Parameters: open Kitaev chain, $L=10$, $L_A=3$, $J=\Delta=1$, $\mu=0.5$, N\'eel initial state, $\gamma_+=\gamma_-=0.15$, $t\in[0,15]$ with step $0.01$.}
	\label{fig:appendix_exact_vs_gaussian}
\end{figure*}

\subsection{Balanced versus imbalanced gain/loss}

When $\gamma_+=\gamma_-$, the correlation-matrix equation of motion Eq.~\eqref{eq:correlation_matrix_ode} admits the closed-form solution (CMS) of Eq.~\eqref{eq:balanced_closed_form}, and the time evolution is obtained spectrally after a single diagonalization of the BdG matrix. When $\gamma_+\neq\gamma_-$, the dynamics remains Gaussian but no analytic simplification of this form is available; one must instead integrate Eq.~\eqref{eq:correlation_matrix_ode} using a high-order ODE solver---we refer to this as the correlation-matrix differential-equation (CMDE) approach. Figure~\ref{fig:appendix_balanced_vs_imbalanced} makes the distinction explicit by plotting the relative Frobenius error between the CMDE solution and the closed-form CMS expression. The error remains at numerical precision in the balanced case, confirming the validity of the closed-form solution, and grows visibly once the same expression is applied in the imbalanced regime.

\begin{figure}[htbp]
	\centering
	\includegraphics[width=\columnwidth]{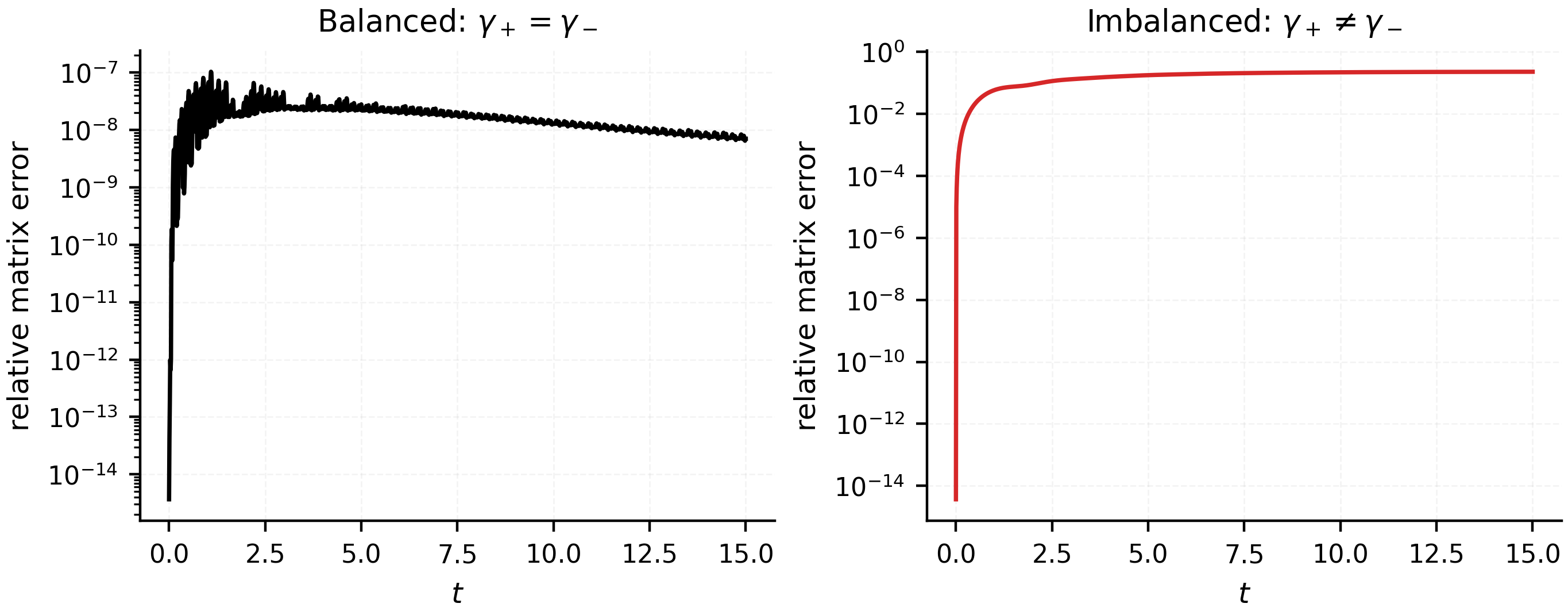}
	\caption{Relative Frobenius error $\|C_{\mathrm{CMDE}}(t)-C_{\mathrm{CMS}}(t)\|_F /\|C_{\mathrm{CMDE}}(t)\|_F$ between the numerical CMDE solution and the closed-form CMS expression Eq.~\eqref{eq:balanced_closed_form}. For $\gamma_+=\gamma_-$ the error stays at numerical precision; for $\gamma_+\neq\gamma_-$ it grows, confirming that the closed-form solution is specific to the balanced case and that the full differential equation must be integrated in the imbalanced regime. Parameters: open Kitaev chain, $L=128$, $J=\Delta=1$, $\mu=0.5$, N\'eel initial state; balanced case $(\gamma_+,\gamma_-)=(0.15,0.15)$, imbalanced case $(\gamma_+,\gamma_-)=(0.20,0.10)$, $t\in[0,15]$ with step $0.01$.}
	\label{fig:appendix_balanced_vs_imbalanced}
\end{figure}

\subsection{Convergence of the mutual-information tracking}

We verify the robustness of the tracking results shown in Fig.~\ref{fig:gaussian_tracking} against both time-step resolution and bipartition choice. For the baseline parameters $L=128$, $L_A=11$, $\mu=0.5$, $J=\Delta=1$, and $\gamma_+=\gamma_-=0.15$, the geometric decoherence time is $\tau_{\mathrm{d}}^{\mathrm{g}}=0.73$ and the peak-based estimate is $\tau_{\mathrm{d}}^{\mathrm{peak}}=0.66$. The peak fermionic negativity is $\mathcal{E}_F^{\mathrm{peak}}=0.72$ and the peak mutual information is $I^{\mathrm{peak}}=1.07$. Reducing the time step from $0.01$ to $0.005$ (Fig.~\ref{fig:appendix_tracking_dt005}) leaves all four quantities unchanged to high accuracy, confirming that both the geometric decoherence time and the shared peak and decay timescales of $\mathcal{E}_F$ and $I(A{:}B)$---the content of Theorem~\ref{cor:product_decay} and the empirical tracking claim of Sec.~\ref{subsec:MI_asymptotics}---are converged with respect to the sampling interval.

\begin{figure}[htbp]
	\centering
	\includegraphics[width=0.82\columnwidth]{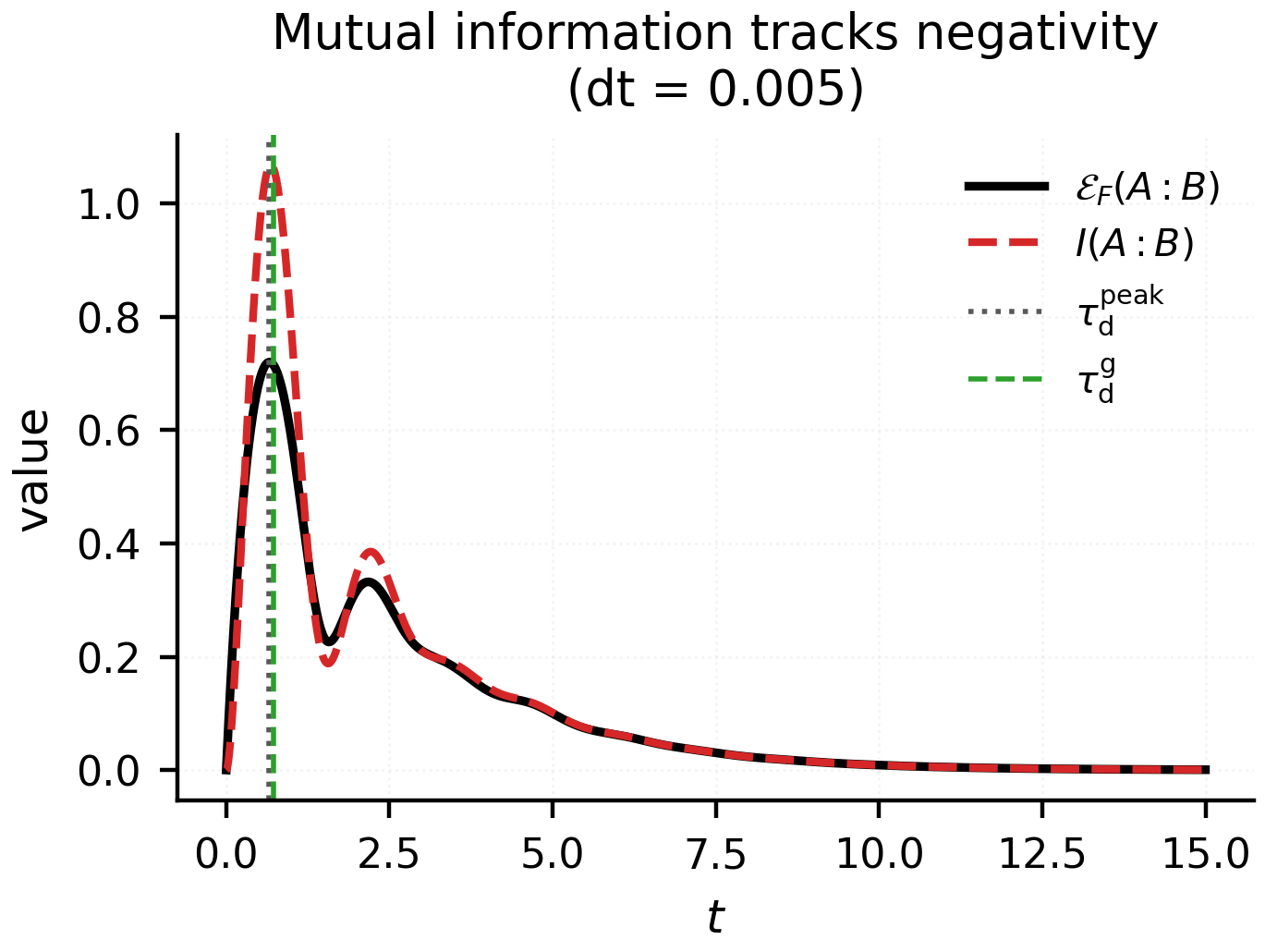}
	\caption{Same diagnostic as Fig.~\ref{fig:gaussian_tracking}, but with time step $\Delta t=0.005$. The mutual information and fermionic Gaussian negativity exhibit the same early-time growth, peak, and subsequent decay as at $\Delta t=0.01$, confirming convergence of the shared timescales with respect to the sampling interval. Parameters: open Kitaev chain, $L=128$, $L_A=11$, $J=\Delta=1$, $\mu=0.5$, N\'eel initial state, $\gamma_+=\gamma_-=0.15$, $t\in[0,15]$.}
	\label{fig:appendix_tracking_dt005}
\end{figure}

Increasing the subsystem from the boundary block $L_A=11$ to the half-chain cut $L_A=64$ (Fig.~\ref{fig:appendix_tracking_halfchain}) leaves $\tau_{\mathrm{d}}^{\mathrm{g}}=0.73$, $\tau_{\mathrm{d}}^{\mathrm{peak}}=0.66$, $\mathcal{E}_F^{\mathrm{peak}}=0.72$, and $I^{\mathrm{peak}}=1.07$ unchanged to high accuracy. This insensitivity to bipartition is consistent with the uniform, site-local nature of the balanced gain/loss protocol: the dominant decoherence scale is set by the global relaxation toward the maximally mixed Gaussian fixed point rather than by any feature specific to a particular cut.

\begin{figure}[htbp]
	\centering
	\includegraphics[width=0.82\columnwidth]{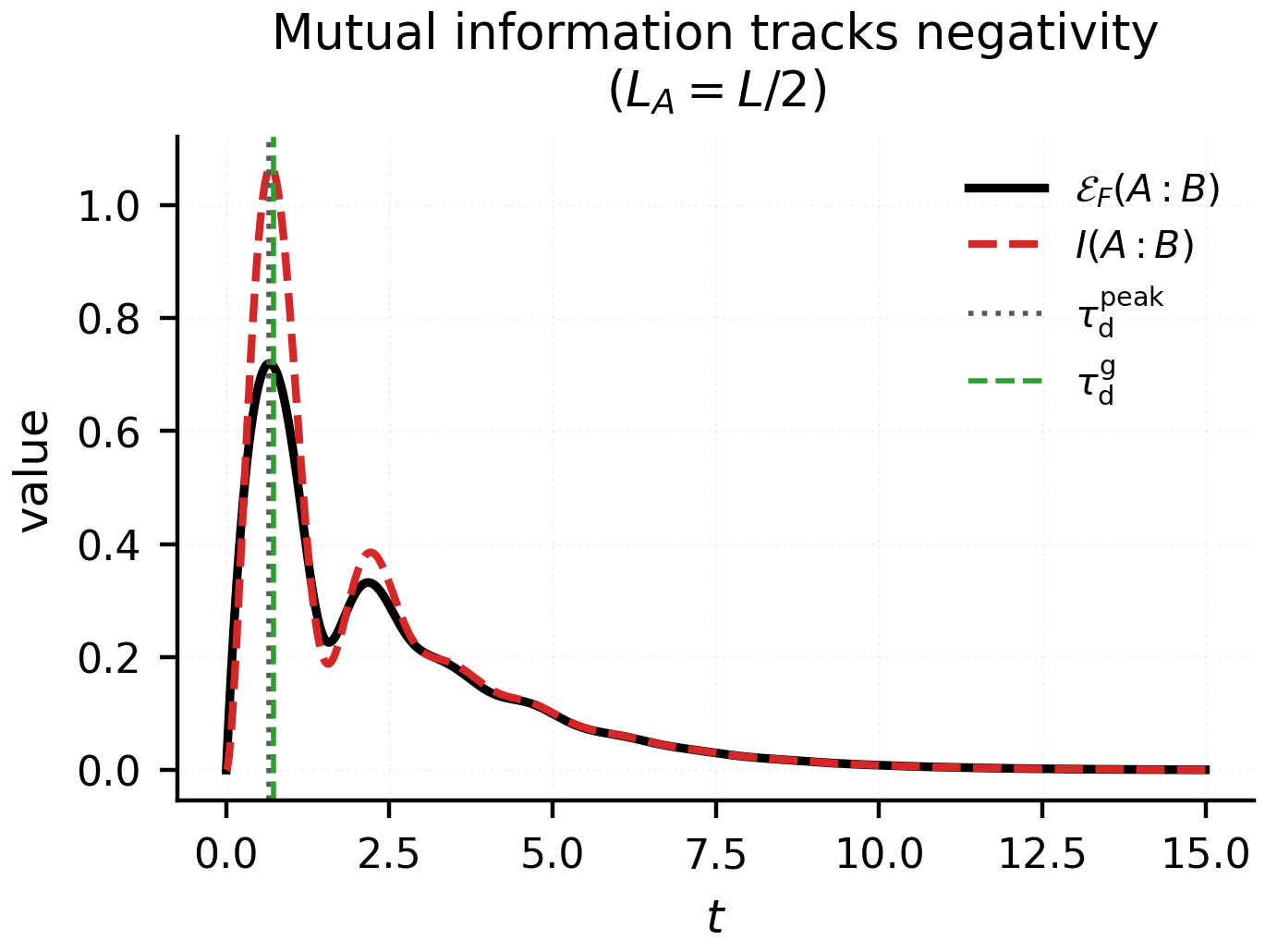}
	\caption{Same diagnostic as Fig.~\ref{fig:gaussian_tracking}, but for a half-chain cut $L_A=L/2=64$. The peak time and asymptotic decay rate of both $\mathcal{E}_F$ and $I(A{:}B)$ are unchanged from the boundary-block result, showing that the tracking relation holds independently of bipartition choice for these parameters. Parameters: open Kitaev chain, $L=128$, $L_A=64$, $J=\Delta=1$, $\mu=0.5$, N\'eel initial state, $\gamma_+=\gamma_-=0.15$, $t\in[0,15]$ with step $0.01$.}
	\label{fig:appendix_tracking_halfchain}
\end{figure}

\section{Finite-Size Robustness of the XXZ Entropy- Negativity Trajectories}
\label{app:xxz-parametric-finite-size}

To verify that the geometric mechanism identified in the interacting XXZ chain is not specific to the $L=10$ system of Sec.~\ref{sec:many_body}, we repeat the entropy--negativity trajectory analysis for $L=8$ and $L=9$. Figures~\ref{fig:xxz_parametric_L8} and~\ref{fig:xxz_parametric_L9} use the same plotting conventions as Fig.~\ref{fig:many_body_parametric}. In both cases, the unitary dynamics pins the trajectory to the pure-state diagonal at early times, and the open-system trajectory bends away at finite time---marking the loss of monotonicity that defines $\tau_{\mathrm{d}}^{\mathrm{g}}$---with the same qualitative behavior as at $L=10$. This confirms that the geometric decoherence mechanism remains robust under modest changes in system size over the accessible exact many-body window. A thorough system-size analysis in the interacting regime will be conducted in future work.

\begin{figure*}[htbp]
	\centering
	\begin{subfigure}[t]{0.48\textwidth}
		\centering
		\includegraphics[width=\linewidth]{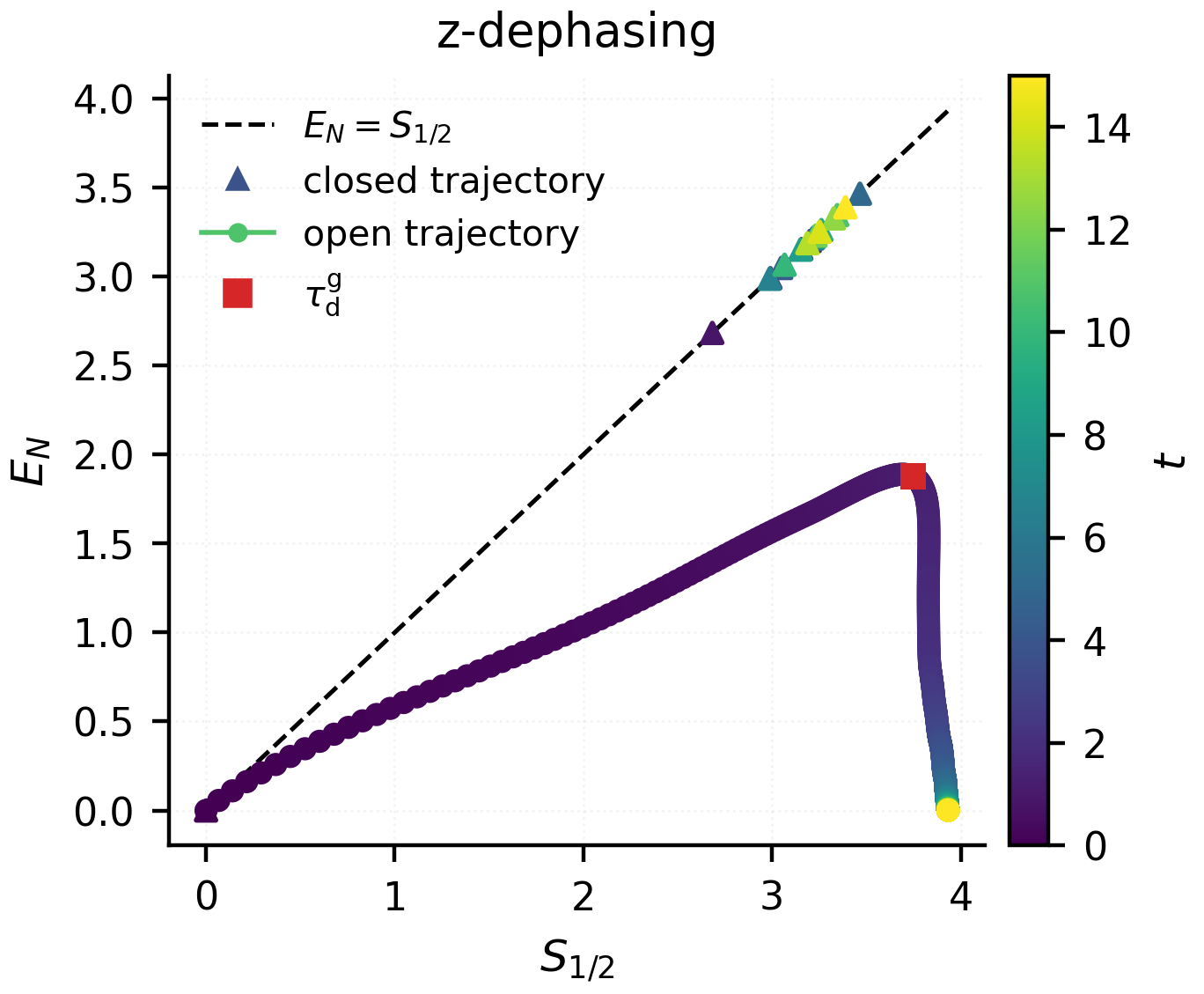}
		\caption{}
	\end{subfigure}\hfill
	\begin{subfigure}[t]{0.48\textwidth}
		\centering
		\includegraphics[width=\linewidth]{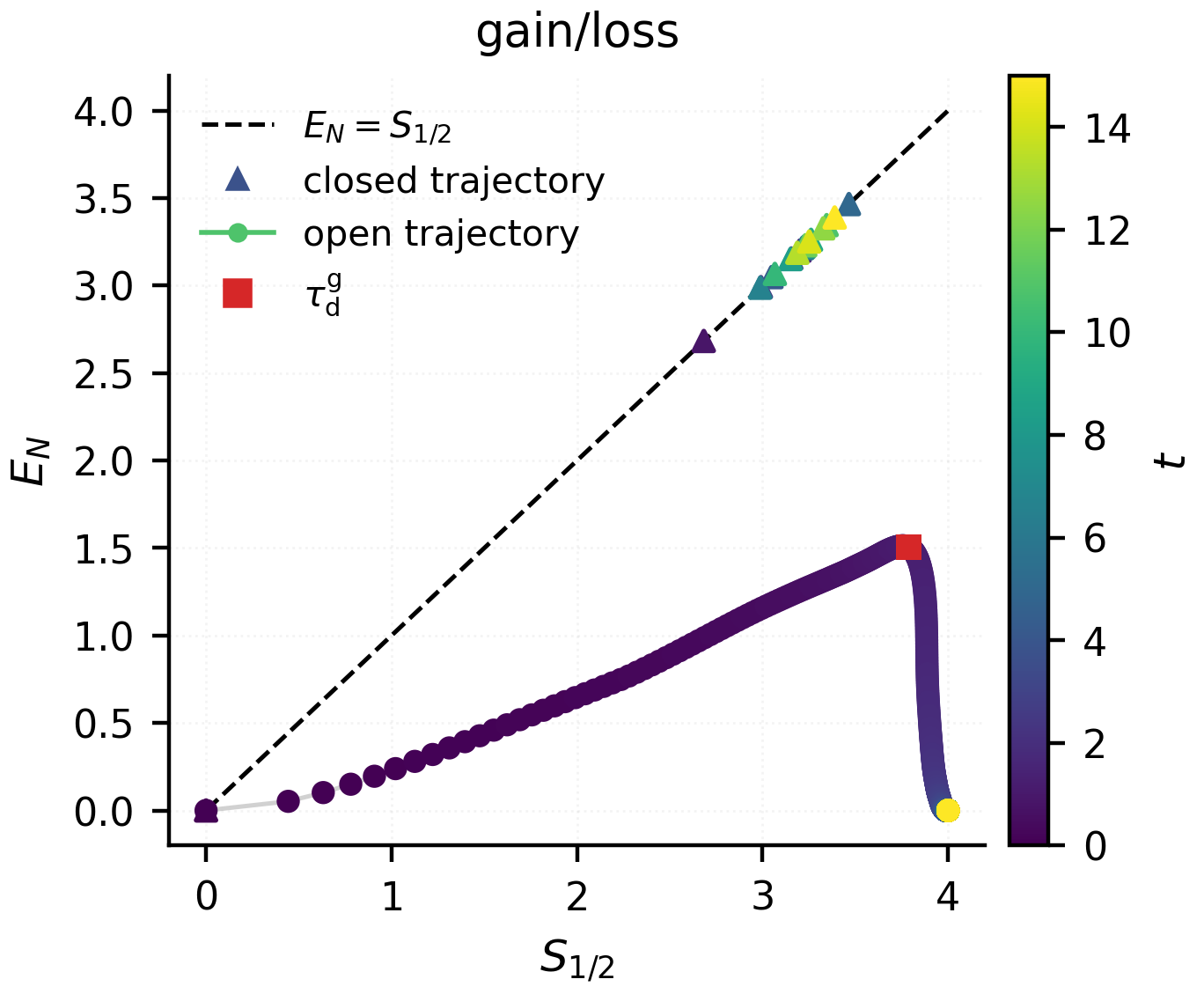}
		\caption{}
	\end{subfigure}
	\caption{Entropy--negativity trajectories for the open XXZ chain with $L=8$ and half-chain bipartition $L_A=4$, using the same conventions as Fig.~\ref{fig:many_body_parametric}. (a)~Local $z$-dephasing. (b)~Balanced gain/loss. The geometric bending that defines $\tau_{\mathrm{d}}^{\mathrm{g}}$ is clearly visible in both panels. Parameters: $J=1$, $J_z=0.55$, N\'eel initial state, $t\in[0,15]$ with step $0.01$; panel~(a) uses $\gamma_z=0.15$, panel~(b) uses $\gamma_+=\gamma_-=0.15$.}
	\label{fig:xxz_parametric_L8}
\end{figure*}

\begin{figure*}[htbp]
	\centering
	\begin{subfigure}[t]{0.48\textwidth}
		\centering
		\includegraphics[width=\linewidth]{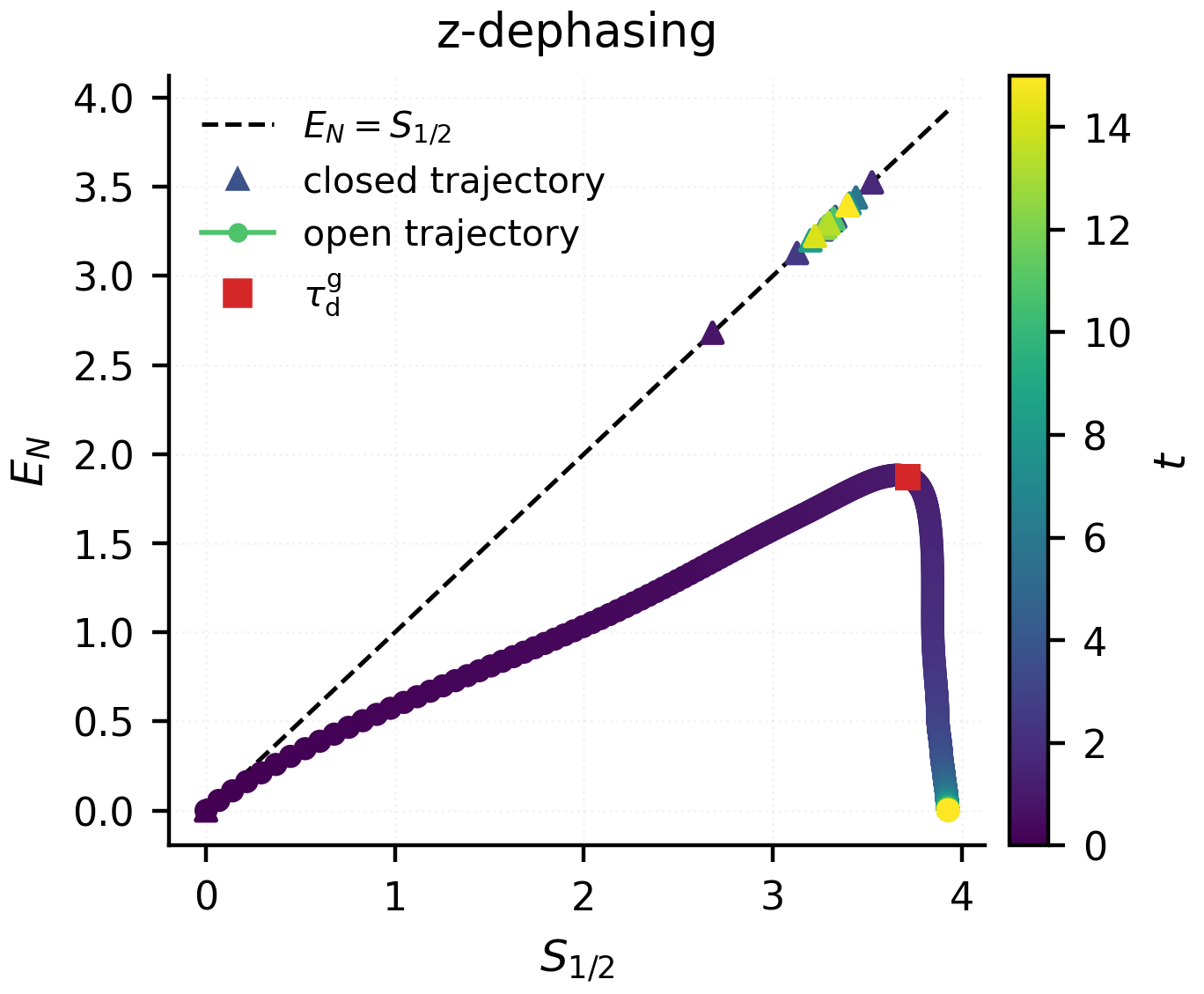}
		\caption{}
	\end{subfigure}\hfill
	\begin{subfigure}[t]{0.48\textwidth}
		\centering
		\includegraphics[width=\linewidth]{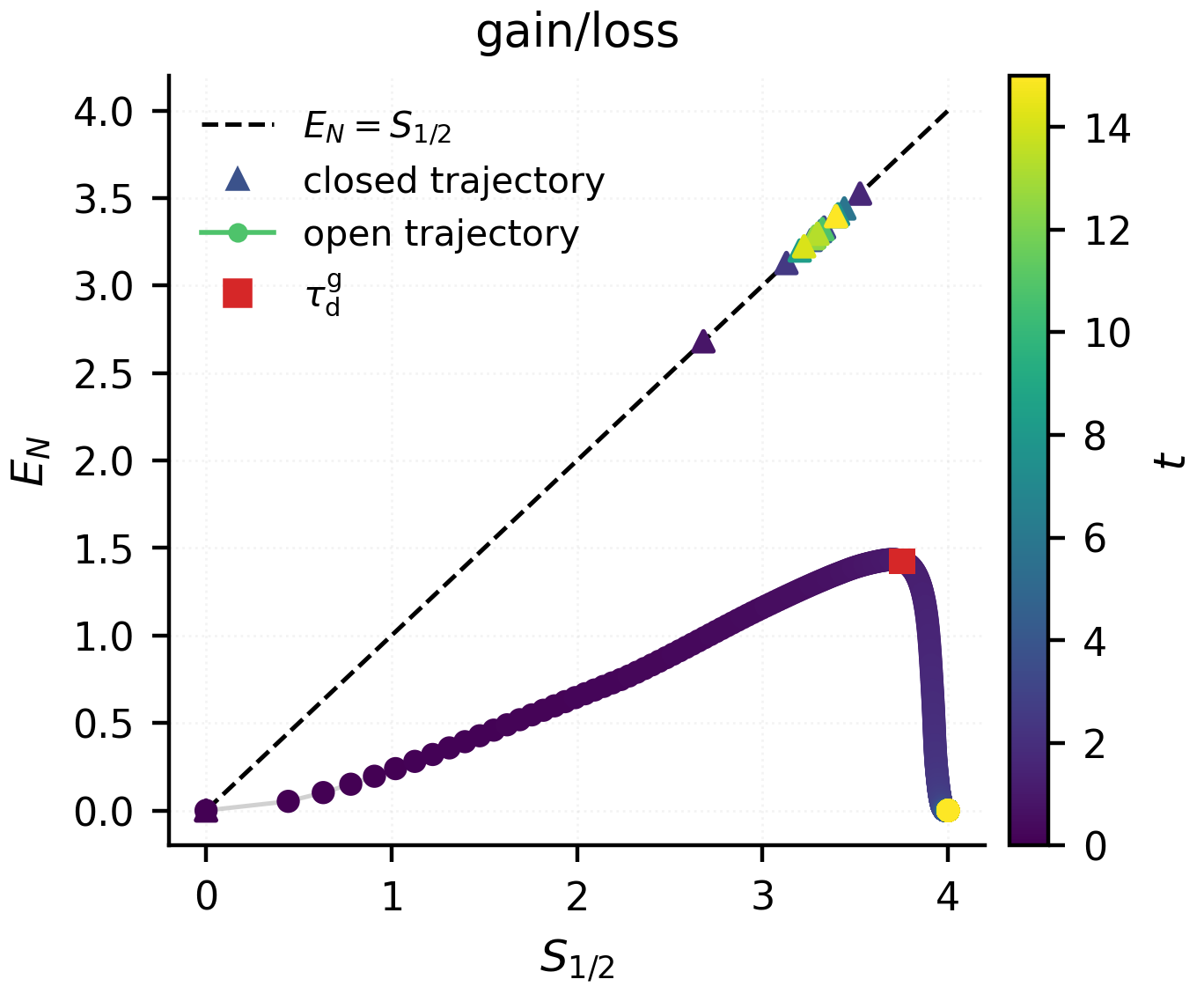}
		\caption{}
	\end{subfigure}
	\caption{Entropy--negativity trajectories for the open XXZ chain with $L=9$ and left-block bipartition $L_A=4$, using the same conventions as Fig.~\ref{fig:many_body_parametric}. (a)~Local $z$-dephasing. (b)~Balanced gain/loss. As at $L=10$, both panels show a clear loss of monotonicity in the $(S_{1/2},E_N)$ plane at finite time, consistent with the geometric decoherence criterion of Definition~\ref{def:geom_dec}. Parameters: $J=1$, $J_z=0.55$, N\'eel initial state, $t\in[0,15]$ with step $0.01$; panel~(a) uses $\gamma_z=0.15$, panel~(b) uses $\gamma_+=\gamma_-=0.15$.}
	\label{fig:xxz_parametric_L9}
\end{figure*}

\end{document}